\setlist{nolistsep,noitemsep}
\newcounter{definitions}
\newcounter{examples}
\newcounter{propositions}
\newcounter{corollaries}
\newtheorem{proposition}[propositions]{Proposition}
\newtheorem{definition}[definitions]{Definition}
\newtheorem{lemma}{Lemma}
\newtheorem{corollary}[corollaries]{Corollary}
\newtheorem{theorem}{Theorem}
\newtheorem{example}[examples]{Example}
\begin{document}
	\title{Only Time Will Tell: Credible Dynamic Signaling\footnote{This paper is based on chapter 3 of the author's Ph.D. thesis. The author thanks Nemanja Anti{\'c}, Eddie Dekel, Jeffrey Ely, Yingni Guo, Nicolas Inostroza, Johan Lagerl{\"o}f, Alexey Makarin, Wojciech Olszewski, Marco Schwarz, Ludvig Sinander, Peter Norman S{\o}rensen, Bruno Strulovici
    and 
    seminar participants at Northwestern University and University of Copenhagen
    for valuable feedback and helpful comments.}}
	
	\author{Egor Starkov\footnote{Department of Economics, University of Copenhagen, {\O}ster Farimagsgade 5, bygning 26, 1353 K{\o}benhavn K, Denmark; e-mail: \href{mailto:egor@starkov.email}{egor@starkov.email}.}}
	
	\maketitle

\begin{abstract}
	This paper characterizes informational outcomes in a model of dynamic signaling with vanishing commitment power. It shows that contrary to popular belief, informative equilibria with payoff-relevant signaling can exist without requiring unreasonable off-path beliefs. The paper provides a sharp characterization of possible separating equilibria: all signaling must take place through attrition, when the weakest type mixes between revealing own type and pooling with the stronger types. The framework explored in the paper is general, imposing only minimal assumptions on payoff monotonicity and single-crossing. Applications to bargaining, monopoly price signaling, and labor market signaling are developed to demonstrate the results in specific contexts.
   	
	\textbf{Keywords}: dynamic signaling, repeated signaling, reputation, attrition
	
	\textbf{JEL Codes}: C73, D82, D83, L15
\end{abstract}

\section{Introduction} \label{sec:intro}

An antitrust case was recently brought against iPhone producer Apple by Epic Games, which tries to determine ``whether the market for in-app purchases within the App Store is unfairly monopolistic, and whether iOS itself is a monopoly that should be opened up to third-party stores and side-loaded apps'' (\cite{verge_20210522}).
This can be seen as the culmination of long-lasting rumblings among the iPhone app developers. For example, as reported by the New York Times a year earlier:
\begin{quotation}
	Many companies and app developers complain that Apple forces them to pay its commission to be included in the App Store, which is crucial to reaching the roughly 900 million people with iPhones. ... ``If you’re not in the App Store today, you’re not online. Your business cannot function,'' said Andy Yen, the chief executive of ProtonMail. ... ``If you want to pass through their gates, they’re going to charge you 30 percent of your revenue.'' (\citet{nytimes_20200728})
\end{quotation} 

One of Apple's main defensive points in regards to its App Store monopoly, both within the scope of this case and more broadly, is that it never raised the commission it charges the app developers, even after it had allegedly acquired substantial monopoly power in the smartphone market.\footnote{
	``In the more than a decade since the App Store debuted, [Apple has] never raised the commission or added a single fee. In fact we have reduced them for subscriptions and exempted additional categories of apps.'' (Written testimony of Apple CEO Tim Cook, available in \citet{cook_testimony}). This testimony was referred to during the Epic v. Apple hearings (\citet{robertson_tweet}). In particular, during the closing remarks, ``Apple countered by claiming that its cut of app and in-app purchases ... didn't go up at all \emph{when Epic Games claimed Apple became a monopoly in 2010}'' (\citet{appleinsider_20210524}, emphasis added).}
Apple argued that it has, in fact, reduced the commission for some developers as recently as 2020 via its Small Business Program (see \citet{verge_20201118} for more details).
The judge in the Epic v. Apple hearings was not convinced by this argument and has argued that this is not indicative of competition: ``The issue with the \$1 million Small Business Program, at least from what I’ve seen thus far: that really wasn’t the result of competition. That seemed to be a result of the pressure that you’re feeling from investigations, from lawsuits, not competition,'' (\citet{verge_20210521}) -- said the judge, referring to the increasing scrutiny that Apple and other tech giants have been facing from the regulators in recent years.\footnote{See \citet{verge_20190513,verge_20191115,verge_20200616,washpost_20200729} for news coverage of four precedents pertaining to Apple within less than two years.}

Does Apple's argument have merit? Can the prices that a firm sets serve as evidence of its competitive environment when it faces regulatory pressure? When the firm has market power, it can still price low in an attempt to signal to the regulator that it is pressured by competition -- but it is also facing a stronger temptation to exploit its market power and set high prices than a genuinely competitive firm. The classical signaling theory a la \citet{Spe73} suggests that both pooling and separating equilibria can arise, i.e., prices may or may not be informative of the firm's competitive environment.
However, the standard signaling models are static, meaning that for their results to apply, the firm in our story needs to be able to not only set the price today, but also commit to not change it in the future. Without such commitment no single pricing decision would seemingly have enough weight to be informative, as originally noted in a different context by \citet{AP87}: if the firm could set some price that conclusively proves it faces strong competition, it would do so for a short time, document this decision, and use it as proof in the future (after reverting back to monopolistic pricing).

This paper shows that the scope for informative signaling, while limited, does in fact exist in dynamic settings without commitment, contrary to the intuition above. In particular, it explores a general signaling model, in which a single long-lived sender is privately informed of his type and engages in a repeated interaction with a receiver, where the periods are vanishingly short. The receiver makes inferences about the sender's type from his action choices, and the sender's payoff is increasing in his reputation with the receiver. We show that under appropriate monotonicity and single-crossing conditions on the sender's payoff function, payoff-relevant signaling is possible in this setting via what is effectively a war of attrition, in which all sender types pool on the same action, with the lowest type mixing between pooling with the rest and separating to a myopically optimal action. Beyond such attrition, actions are as informative as cheap talk -- meaning that attrition is the only way in which payoff-relevant signaling can proceed in this class of models. The contribution of this paper is both in showing the existence of a wedge between signaling and cheap talk in the setting under consideration (i.e., that signaling \emph{is} possible), and characterizing the equilibrium outcomes.

In the context of the Apple example, this paper implies that while singular price drops or cuts can not be a convincing evidence of a lack of monopoly power in equilibrium, persistent pricing at a low level can be suggestive of it. A monopolist would in such an informative equilibrium mix between maintaining the pooling (competitive) price and raising it to a monopoly level, revealing itself. However, there exists no perfectly separating equilibrium, in which the firm charges different prices when it has monopoly power and when it is competing against other firms in the industry. As a consequence, it is not possible to completely rule out the possibility that the firm has market power based on pricing decisions alone.
All this is demonstrated by the applied model in Section \ref{sub:antitrust}. 

The main model explored in this paper is substantially more general than the story above and is intended to serve as a framework suitable for many applications. To illustrate this, Section \ref{sec:app} also develops applications to bargaining and labor market signaling. 
In the former, a seller who privately knows the value of the item bargains with a single potential buyer over the price of the sale. In the latter, a student privately informed of his ability repeatedly chooses whether to invest in education, which may signal this ability to firms on the job market. Both problems are classical applications of signaling theory.
We show that the results from the general model apply, so all informative equilibria in each of these settings must take the attrition form.

Going back to the general model, it is worth noting that even though the attrition structure is restrictive, it allows for nontrivial equilibrium multiplicity. In addition to various possible combinations of informative and uninformative periods, multiple attrition outcomes can be sustained in equilibrium in any given period, which differ with respect to the probability of separation of the lowest type. So while attrition is the unique form that informative equilibria can take, multiple informative equilibria may exist in which attrition proceeds at different speeds. The extent to which this dimension of multiplicity manifests in a given setting depends on the richness of the action set.

Finally, this paper provides a takeaway regarding modelling assumptions that would be valuable to applied theorists investigating whether the receiver perfectly learns the sender's type (equivalently, whether social learning occurs) in a given setting in the limit as $t \to \infty$. In particular, if one adopts the simplifying assumption of there being only two types, then they could plausibly arrive at a conclusion that asymptotic learning is perfect. Yet, as this paper implies, this conclusion would not extend to the setting with finitely many types: learning can only occur regarding the lowest type, but cannot distinguish any of the higher types. In turn, this impossibility result for finite types does not necessarily extend to the setting with an interval of types, where asymptotic learning is possible again (\citet{FS10} provide an example of such model and equilibrium in the bargaining context). The latter observation implies that a finite-type approximation of a continuous-time signaling model may produce misleading results.

The fact that informative equilibria of attrition form exist in dynamic signaling models has been observed in applied models before. In particular, similar equilibria in specific settings have been obtained by \citet{Vin90,DL06,DG12,LL13,DL16,Dil17,KK18} in the context of bargaining; \citet*{SZZ16} in corporate finance; \citet*{Vet97,AAC17,GK19,SS18a} in industrial organization/marketing; \citet{SS18b} and \citet{Vong21} in cheap talk games; \citet{GP12} in disclosure games; \citet*{DTE18} in Dynkin games; \citet{Pei20} in trust games. 
The contribution of this paper is in setting up a general model that nests many of the models above and in identifying the sufficient conditions that yield uniqueness of attrition as the only informative equilbrium outcome.

Our analysis relies on the restriction of off-equilibrium path beliefs to be ``reasonable''. In particular, we adopt the assumption of non-increasing belief supports or, as labeled by \citet{OR90}, \emph{NDOC} (``Never Dissuaded Once Convinced'') assumption. As the name suggests, it implies that once the receiver has ruled out some type of the sender as impossible, the receiver stands by this belief and never again assigns positive probability to that type, including off the equilibrium path.
\citet{Kaya09} and \citet{Rod12a,Rod12b} have shown that in the absence of NDOC full instantaneous separation is possible in dynamic settings, since the sender's behavior can be disciplined by strong reputational threats in case of deviations. While the approach can be justified when the sender's type may change over time and hence needs constant re-verification, in other settings it is susceptible to a critique of using unreasonable off-path threats to sustain an equilibrium -- a practice typically reproved in the literature on equilibrium refinements for static signaling games, as well as equilibrium concepts for dynamic games.\footnote{C.f. \citet{BS87} and \citet{CK87} for signaling and Chapter 4 in \citet{Mye97} for extensive-form games respectively.} 

This paper belongs to the literature on signaling models, which developed from a seminal contribution by \citet{Spe73}. See \citet{Ril01} for an excellent survey of the early literature on static signaling models. \citet{AP87} were among the first to recognize that if signaling is viewed as a dynamic process, and the sender cannot commit (contractually or otherwise) to future actions, then perfectly separating outcomes may not be sustainable in equilibrium, as in our salesman example above. \citet{BP93} proposed a similar point in a contracting model, in which the informed party can propose to renegotiate after a contract is signed and before it is executed.
The literature has responded to this conceptual challenge by searching for aspects of such dynamic interactions which would neutralize this impossibility and restore perfectly separating equilibria. For example, \citet{Wei83} considers a model in which the sender derives explicit utility from signaling. \citet{NVD90} and \citet{Swi99} argue that perfect separation can be sustained via tacit collusion on the receivers' side.\footnote{
	In their setting, a worker is signaling ability via years of education, and firms can at any point offer the worker a job. In a perfectly separating equilibrium, able workers choose to acquire college education, while the less able workers enter the job market right after high school. As soon as a worker enters college, they are revealed as able, hence firms have incentives to offer them a position with a high wage immediately, without waiting for the worker to finish their degree. In a tacit collusion equilibrium, as soon as one firm makes such an offer, other firms immediately start a bidding war for this worker, thereby eliminating any gains that could be accrued by the deviating firm by hiring this worker. \label{foot:NVD_Swi}}
\citet{Rod12a,Rod12b} obtains perfectly separating outcomes in a dynamic signaling model, in which the receiver's beliefs violate NDOC, which is justified by the possibility that the sender's type can change over time, and thus needs to be constantly reaffirmed. This paper shows instead that informative -- albeit not perfectly separating -- equilibria can exist in dynamic settings without any of the aforementioned features.
\citet{Dil17,Hei18}, and \citet{Whi19} consider dynamic signaling models in which the sender's actions are only imperfectly observed by the receiver. While equilibria with perfect separation in strategies are possible in such settings, the receiver can not learn the sender's type instantaneously. This paper shows that exogenous noise in observations is not necessary to generate such an equilibrium with gradual learning, and the noise can stem from the sender's strategy instead.

The remainder of this paper is organized as follows. Section \ref{sec:model} sets up the general model and introduces the two assumptions that serve as sufficient conditions for our results: payoff monotonicity and NDOC. We then proceed to analyze two versions of this model. The two-type version in Section \ref{sec:twotypes} can be seen as an illustrative example. The version with finitely many types, which requires an additional single-crossing assumption, is then explored in Section \ref{sec:finite}. Section \ref{sec:app} considers applications to price signaling, bargaining and labor market signaling, setting up the respective models and verifying that the required assumptions hold. Section \ref{sec:concl} concludes. The proofs of most statements are relegated to Appendix A. Appendix B constructs an informative equilibrium in the context of the price signaling application from Section \ref{sub:antitrust}.

\section{Model} \label{sec:model}

\subsection{Primitives}

We will be looking at a continuous limit of a discrete-time infinite-horizon game. Time is indexed by $t \in \mathcal{T} \equiv \{0, dt, 2dt, ... \}$; all the results apply to the limit as $dt \to 0$.
There are two players: a long-lived sender (agent) and a receiver. The agent has some persistent \emph{type} $\theta \in \Theta$, where $\Theta \subseteq \mathbb{R}$ is finite. Equivalently, $\theta$ can be the state of the world that the agent is privately informed of.

In every period $t$, a Stackelberg-type sequential game is played between the agent and the receiver.\footnote{The results of the paper apply (after some relabeling) to simultaneous-move games as well, but for illustrative simplicity we focus on Stackelberg games in the analysis.}
First, the agent chooses an \emph{action} $a_t \in A$ from a compact action set $A$. This action choice affects the realization of a public \emph{outcome} $x_t \in X$, the distribution of which at time $t$ depends on the agent's type $\theta$ and the past history $h_t$. We assume that outcomes never allow to perfectly identify $\theta$: the support of $x_t$ conditional on $(h_t,a_t,\theta)$ is independent of $\theta$.\footnote{One could call $x_t$ a public ``signal''; we avoid this phrasing so as to not create confusion with the process of signaling through actions. These outcomes are introduced to demonstrate that the results in this paper hold even if some noisy public information is being revealed, as long as it does not allow identifying individual types. This includes both exogenous public news and public signals produced as a result of the agent's actions. }
After an outcome has realized, the receiver selects an action $b_t \in B$ from a compact set $B$.
A time-$t$ \emph{history} is defined as $h_t \equiv ((a_0,x_0,b_0), \dots, (a_{t-dt},x_{t-dt},b_{t-dt}))$, i.e., as the record of past actions and outcomes up to, but not including, time $t \in \mathcal{T}$. Let $\mathcal{H}_t$ denote the set of all such time-$t$ histories, and $\mathcal{H} \equiv \cup_{t \in \mathcal{T}} \mathcal{H}_t$ be the set of all such histories.
For any two periods $s>t$, we write $h_s \succ h_t$ if history $h_s$ succeeds history $h_t$, i.e., $h_s$ and $h_t$ coincide on $(0,\dots,t-dt)$. Weak succession is denoted as $h_s \succeq h_t$ and means ``either $s>t$ and $h_s \succ h_t$, or $s=t$ and $h_s=h_t$''.

The receiver begins the game with a commonly known prior \emph{belief} $p_0 \in \varDelta(\Theta)$ about the agent's type. In every period $t$ the receiver observes the agent's latest action $a_t$ and the realized outcome $x_t$, and updates her belief $p_t$ upon this information. Specifically, we use $p(h_t,a_t,x_t)$ to denote the resulting posterior belief. Hereinafter, belief $p_t$ is also referred to as the agent's \emph{reputation}. Further, we use $p(h_t)$ to denote the belief conditioned on $h_t$ -- i.e., at the beginning of period $t$ -- as a shorthand for $p(h_{t-dt}, a_{t-dt}, x_{t-dt})$, the receiver's belief at her previous action node.
For any $p_t$, let $S(p_t) \subseteq \Theta$ be the support of belief $p_t$, i.e., the set of types to which $p_t$ assigns positive weight.
With abuse of notation, let $S(h_t) \equiv S(p(h_t))$.

At the end of every period $t$, the agent receives flow payoff $u(a_t,b_t,\theta)$, and the receiver obtains payoff $w(a_t,b_t,x_t,\theta)$. 
For simplicity, we assume that the agent's payoff $u$ does not depend on the realized outcome $x_t$ except through the effect it has on the receiver's action $b_t$.
Both functions are assumed to be upper semi-continuous in the respective player's action: $\lim_{a \to a_t} u(a,b_t,\theta) \leq u(a_t,b_t,\theta)$ and $\lim_{b \to b_t} w(a_t,b,x_t,\theta) \leq w(a_t,b_t,x_t,\theta)$ for all $a_t,b_t,x_t,\theta$.
Further, assume that function $u$ is bounded. 

To avoid having to deal with the repeated game effects and focus purely on the signaling concerns, we will be working under the assumption that the receiver is myopic and only maximizes the current flow payoff. This assumption can be justified on its own merit in some settings, e.g., when ``the receiver'' is a proxy for a competitive market of receivers or a sequence of short-lived players.\footnote{
	For example, in the context of the labor market signaling, a worker is continuously signaling his ability to a population of competitive firms, which, in an attempt to get the worker, bid wages up to the worker's expected productivity, thereby eliminating any strategic element in wage offers. Alternatively, in the context of price signaling, a firm uses its price (and price history) to signal its product's value to changing generations of short-lived consumers.}
This assumption is not strictly necessary, although it simplifies the exposition. Footnote \ref{foot:myopy} describes the extent to which the results can be applied in a model with a strategic receiver.

The \emph{receiver's strategy} is $\mathbf{b}$: $\mathcal{H} \times A \times X \to B$. Strategy $\mathbf{b}^*$ is \emph{optimal} for the receiver at a given history $h_t$ if the action it prescribes maximizes the receiver's current flow payoff given the agent's strategy and the receiver's belief at that history
\begin{equation}
	\label{eq:rec_opt}
	\mathbf{b}^*(h_t,a_t,x_t) \in \arg \max_b \mathbb{E}[ w(a_t,b,x_t,\theta) \mid p(h_t) ],
\end{equation}
where the expectation is taken over $\theta$. For simplicity, assume \emph{no sunspots}: for any $a,x$, and any two histories $h_t, h_s$, if $p(h_t)=p(h_s)$ then $\mathbf{b}^*(h_t,a,x) = \mathbf{b}^*(h_s,a,x)$.
Moving on to the agent, define a \emph{bliss} (myopically optimal) action set for the agent of type $\theta$ at history $h_t$ given receiver's strategy $\mathbf{b}$ as
$$A^*(h_t, \mathbf{b}, \theta) \equiv \arg \max_{a \in A} \left\{ \mathbb{E} [u(a, \mathbf{b}(h_t,a,x_t), \theta)] \right\},$$
where the expectation is taken over $x_t$.
Note that $\mathbf{b}^*$ exists and $A^*$ is non-empty due to the upper semi-continuity of the respective utility functions $w$ and $u$.
A \emph{pure strategy} for the agent of type $\theta$ is $\mathbf{a}_\theta: \mathcal{H} \to A$. Given some belief system $p$ and the receiver's strategy $\mathbf{b}$, let $U(\mathbf{a}_\theta | h_t, \mathbf{b}, \theta)$ denote the expected discounted continuation utility of type $\theta$ from following strategy $\mathbf{a}_\theta$ starting from $h_t \in \mathcal{H}$:
\begin{equation*}
	U(\mathbf{a}_\theta | h_t, \mathbf{b}, \theta) \equiv \mathbb{E} \left[ \sum_{s \in \mathcal{T}, s \geq t} e^{-r(s-t)} u \left(\mathbf{a}_\theta(h_s), \mathbf{b}_s, \theta \right) dt \mid h_t, \theta \right],
\end{equation*}
where $r$ is the agent's discount rate, and the expectation is taken over future outcomes $x_s$.
The agent is assumed to maximize his expected discounted sum of utilities.
Strategy $\mathbf{a}_\theta$ is \emph{optimal} for the agent of type $\theta$ given belief system $p$ and receiver's strategy $\mathbf{b}$ if it maximizes his continuation payoff at every history $h_t \in \mathcal{H}$, i.e., if
\begin{equation}
	\label{eq:util}
	U(\mathbf{a}_\theta | h_t, \mathbf{b}, \theta) = V (h_t,\mathbf{b}, \theta)  \equiv \max_{\mathbf{a}} \left\{ U(\mathbf{a} | h_t, \mathbf{b}, \theta) \right\},
\end{equation}
where $V (h_t,\mathbf{b},\theta)$ is hereinafter referred to as the value function.
With a slight abuse of notation we let 
\begin{align*} 
	V (a|h_t,\mathbf{b},\theta) \equiv \mathbb{E}_{x_t} \left[ u(a,\mathbf{b}(h_t,a,x_t),\theta)dt + e^{-rdt} V(h_{t+dt}, \mathbf{b}, \theta) \mid h_t, a, \mathbf{b}, \theta \right]
\end{align*} 
denote the highest expected continuation utility that type $\theta$ can achieve conditional on taking action $a$ at history $h_t$. The outer expectation is taken w.r.t. the realization of period-$t$ outcome $x_t$, which affects the receiver's belief $p_t$ and, thus, her action $b_t$ and the agent's contemporaneous utility $u(a_t,b_t,\theta)$. The $t+dt$-history is $h_{t+dt} = (h_t, (a_t, x_t,b_t))$.

Finally, a \emph{behavioral strategy} for the agent of type $\theta$ is $\alpha_\theta: \mathcal{H} \to \varDelta(A)$. By the Kuhn's Theorem (\citet{aumann_mixed_1964}), behavioral strategies are equivalent to mixed strategies in this setting. Let $\alpha_\theta (a|h_t)$ denote the probability with which action $a$ should be played by type $\theta$ after history $h_t$ according to strategy $\alpha_\theta(h_t)$.
A behavioral strategy $\alpha_\theta$ is then optimal for $\theta$ if there exists an equivalent mixed strategy (i.e., a probability distribution over pure strategies), such that all pure strategies in its support are optimal.

\subsection{Equilibrium Concept}

Introduced above is a dynamic game of incomplete information. The greatest common factor among the solution concepts used for this class of games (and requiring belief consistency) is Perfect Bayesian Equilibrium (PBE). In such an equilibrium, all players maximize their expected continuation payoffs given their beliefs about other players' actions and beliefs, and these beliefs must be consistent on path with the players' knowledge of the game. 

\begin{definition}
	A Perfect Bayesian Equilibrium is given by an agent's strategy profile $\alpha= \{\alpha_\theta\}_{\theta \in \Theta}$ with $\alpha_\theta: \mathcal{H} \to \varDelta(A)$, a receiver's strategy $\mathbf{b}$: $\mathcal{H} \times A \times X \to B$, and a belief system $p: \mathcal{H} \times A \times X \to \varDelta(\Theta)$ such that:
	\begin{enumerate}
		\item for all $\theta$: strategy profile $\alpha_\theta$ is optimal for the agent of type $\theta$;
		\item strategy $\mathbf{b}$ is optimal for the receiver at all histories $h_t \in \mathcal{H}$;
		\item belief $p$ is updated using Bayes' rule whenever possible.
	\end{enumerate}
\end{definition}

Our main results characterize signaling in all PBE that satisfy asumption (NDOC) as defined in the following subsection, hence they will also apply if one imposes additional restrictions or equilibrium refinements on top of PBE with (NDOC).
As mentioned previously, we explore equilibria for small but positive $dt$, and we are interested in the properties of these equilibria as $dt \to 0$.

\subsection{Assumptions} \label{sub:assum}

The two sections above define the primitives of the model but impose only very minimal restrictions on them. This section describes the two significant assumptions that will be imposed throughout and which are sufficient for the results in the two-type model: Monotonicity and Never Dissuaded Once Convinced. (The version of the model with more than two types requires a third assumption, Single Crossing, which is introduced and discussed separately, in Section \ref{sub:sc}.)
To introduce these assumptions, a few extra bits of notation would prove useful. Firstly, let $\delta_{\theta}$ denote the Dirac delta: given some $\theta \in \Theta$, $p(h_t) = \delta_{\theta}$ is equivalent to $S(h_t) = \{\theta\}$. 
Secondly, let $\tilde{u}_t$ denote the agent's induced flow payoff function given some fixed strategy $\mathbf{b}$ of the receiver: 
\begin{equation}
	\tilde{u}_t (a_t,p(h_t,a_t,x_t),\theta) \equiv u(a_t,\mathbf{b}(h_t,a_t,x_t),\theta).
	\label{eq:utilde}
\end{equation}
The assumptions can then be phrased as follows (with the discussion following afterwards):
\begin{trivlist}
	\item
	\begin{description}[align=right,labelwidth=2.5cm]
		\item[(MON)] Flow payoff function $\tilde{u}_t(a_t,p_t,\theta)$ is weakly increasing in $p_t$ (w.r.t. FOSD order) for all $t,h_t,a_t,\theta$ and all optimal $\mathbf{b}$.\footnote{Monotonicity w.r.t. FOSD order on $p_t$ is understood in the usual way: for any $p ' , p '' \in \varDelta(\Theta)$ such that $p '(\theta ') > p '' (\theta ')$, $p '(\theta '') < p '' (\theta '')$ for some $\theta ' > \theta ''$, and $p '(\theta) = p '' (\theta)$ for all $\theta \in \Theta \backslash \{\theta ', \theta '' \}$, it should be that $\tilde{u}_t(a_t,p',\theta) \geq \tilde{u}_t(a_t,p'',\theta)$. \label{foot:fosd}} 
		Further, if $p_t >_{FOSD} \delta_\theta$ then $\tilde{u}(a_t,p_t,\theta) > \tilde{u}(a_t,\delta_{\theta},\theta)$, and if $p_t <_{FOSD} \delta_\theta$ then $\tilde{u}(a_t,p_t,\theta) < \tilde{u}(a_t,\delta_{\theta},\theta)$.
		\item[(NDOC)] Process $p_t$ is progressively absolutely continuous. I.e., belief supports are non-increasing: for any $h_s \succ h_t$, $S(h_s) \subseteq S(h_t)$.
	\end{description}
\end{trivlist}

The first assumption, (MON), requires that the agent's flow payoff function is increasing in his reputation $p_t$. This captures the core idea of signaling models: the agent would like to signal that his type is high because that induces a favorable reaction from the receiver. For example, a firm with a reputation for quality product is more likely to sell more units, an able worker is more likely to be offered a job, and a strong bargainer is more likely to see the opponent conceding to a demanding offer. 
Monotonicity is primarily a restriction on the model primitives, namely the utility functions: given the receiver's preferences $w$, her optimal strategy $\mathbf{b}$ is unique up to indifference for any $a$ and $p(h_t)$. This makes $\tilde{u}(a,p,\theta)$ a well-defined function given some tie-breaking rule for the receiver. Hence given $w$, (MON) is a condition on the agent's utility function $u(a,b,\theta)$.\footnote{The assumption of receiver's myopy was introduced to render (MON) expressible in terms of model primitives. Any other set of assumptions on the players' payoffs and/or the receiver's behavior that yields (MON) would be sufficient for our results to hold. Developing such assumptions for the case of strategic receiver is not trivial, since the folk theorem dictates that any individually rational payoff for the sender can be sustained in equilibrium regardless of his reputation, which means (MON) does not hold without some additional assumptions on the receiver's behavior. \label{foot:myopy}} 
While the condition is phrased using weak monotonicity, strict preferences relative to degenerate reputation are required to guarantee the presence of signaling effects: any type must always be strictly willing to pool with the higher types and to separate away from the lower types. 

The second assumption, (NDOC), is the refinement of the equilibrium beliefs that drives our analysis. In particular, it says that if $p(\theta|h_t) = 0$ then $p(\theta|h_s) = 0$ for any pair of histories $h_s \succ h_t$ in $\mathcal{H}$. Note that this applies both on and off the equilibrium path. In other words, once the receiver is convinced that a given type of the agent is inconsistent with the evidence (the observed history), she can never be dissuaded from this conviction. This restriction appears reasonable, since over time, only more evidence is collected, but the existing evidence is never forgotten -- including the evidence that lead the receiver to rule out certain types of the agent at the time.\footnote{As argued by \citet{OR90}, p.97: ``[I]f we allow a player in a game of incomplete information to change his mind after he has been persuaded that he is playing with certainty against a given type, then why we do not do so in a game of complete information?''}

Alternatively, (NDOC) can be seen as a weak form of renegotiation-proofness in some settings, as suggested by \citet{EV03}. For example, in the context of labor market signaling, suppose that a firm offers a contract to a high school graduate, according to which it would hire the worker as soon as he obtains a college degree. Suppose further that in equilibrium such a contract is only accepted by able workers (whose cost of learning is low), while less able workers reject it in favor of getting a job immediately. Then if such a contract is accepted, the firm knows the worker is able, and it is in the best mutual interest of the firm and the worker to renegotiate the contract to start the job immediately, since the delay to obtain education is wasteful for both parties.

The (NDOC) assumption has been originally introduced by \citet{OR90}. It has been widely used in applied dynamic models with asymmetric information: one can find analogs of (NDOC), often labelled differently, in \citet*{Rub85,GP86,LeB92,Vet97,KWZ95,Sen00,EV03,FS05,Lai14,BZ16,GK19,SS18b,SS18a}.
(NDOC) is nonetheless a strong assumption and has been criticized as leading to possible equilibrium nonexistence (see \citet*{MTW87} and \citet{NVD90b}). This paper hence characterizes the equilibria conditional on existence, without making any existence claims. However, see Appendix B as well as references above for a number of examples of settings in which equilibria exist.

To simplify the analysis, we strengthen (NDOC) by rendering the receiver pessimistic off the equilibrium path -- her beliefs off path must put all weight on the lowest type among those she has not yet ruled out. This stronger condition is labeled as (NDOC-P) and is defined as follows:
\begin{trivlist}
	\item
	\begin{description}[align=right,labelwidth=2.5cm]
		\item[(NDOC-P)] The off-equilibrium-path beliefs are such that after any action $a$ that is not on equilibrium path at $h_t \in \mathcal{H}$: $p(h_t, a, x_t) = \delta_{\min S(h_{t}) }$ for any $x_t \in X$.\footnote{On-pathness is defined in the usual way; see Section \ref{sub:attrn} for a formal definition.}
	\end{description}
\end{trivlist}
Given (MON), this condition imposes the strongest possible punishment on the sender for any deviation among those punishments that satisfy (NDOC). Therefore, we argue that for any equilibrium that satisfies (NDOC), there exists an equivalent one that satisfies (NDOC-P), despite the latter being a stronger condition.
This claim is formalized by the following lemma, with the proof available in Appendix A.
\begin{lemma} \label{lem:cond}
	If (MON) holds then for any equilibrium that satisfies (NDOC), there exists a payoff-equivalent and on-path strategy-equivalent equilibrium that satisfies (NDOC-P).
\end{lemma}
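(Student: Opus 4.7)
The plan is to start from an arbitrary PBE $(\alpha, \mathbf{b}, p)$ satisfying (NDOC) and explicitly construct a modified profile $(\alpha', \mathbf{b}', p')$ satisfying (NDOC-P) that agrees with the original on the equilibrium path. On-path, simply set $\alpha' = \alpha$, $\mathbf{b}' = \mathbf{b}$, and $p' = p$; payoff equivalence and on-path strategy equivalence are then immediate. For each on-path history $h_t$ and each off-path action $a'_t$, set $p'(h_t, a'_t, x_t) := \delta_{\theta^*}$ with $\theta^* := \min S(h_t)$, and specify the continuation in the resulting subtree so that the receiver plays a best response to $\delta_{\theta^*}$ at every node and each type $\theta$ plays a myopic best response $a^*_\theta \in \arg\max_a \tilde{u}(a, \delta_{\theta^*}, \theta)$ each period. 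By (NDOC) the belief support cannot grow, so beliefs remain $\delta_{\theta^*}$ throughout the subtree regardless of which actions are played; this makes the specification a valid PBE of the subgame and ensures (NDOC-P) at every off-path node.

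The main step is to verify sequential rationality of $\alpha'$ at every on-path history $h_t$. Let $V^o$ and $V^m$ denote the value functions under the two profiles. Since on-path play is unchanged, $V^m(h_t,\theta) = V^o(h_t,\theta)$, and since the original is an equilibrium, $V^o(h_t,\theta) \geq V^o_{\text{dev}}(h_t, a'_t, \theta)$ for every off-path $a'_t$. It therefore suffices to show $V^m_{\text{dev}}(h_t, a'_t, \theta) \leq V^o_{\text{dev}}(h_t, a'_t, \theta)$. Decompose each side into the period-$t$ expected flow and the discounted continuation value. The flow comparison follows immediately from (MON): for every realization of $x_t$, $p(h_t, a'_t, x_t)$ has support in $S(h_t)$ by (NDOC) and therefore FOSD-dominates $\delta_{\theta^*}$, so $\tilde{u}(a'_t, \delta_{\theta^*}, \theta) \leq \tilde{u}(a'_t, p(h_t, a'_t, x_t), \theta)$ for every $\theta$.

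For the continuation values, the modified side delivers type $\theta$ the per-period flow $\tilde{V}_{\theta^*}(\theta) := \sup_a \tilde{u}(a, \delta_{\theta^*}, \theta)$ forever. To lower-bound the original continuation value by the same quantity, consider the benchmark strategy in which type $\theta$ always plays the myopic maximizer $a^*_\theta$. Along every history induced in the original subgame by this strategy, (NDOC) keeps the belief support inside $S(h_t)$, so the realized belief FOSD-dominates $\delta_{\theta^*}$; by (MON), the per-period flow is at least $\tilde{u}(a^*_\theta, \delta_{\theta^*}, \theta) = \tilde{V}_{\theta^*}(\theta)$. Summing and discounting yields the desired bound, and combining the two comparisons gives $V^m_{\text{dev}} \leq V^o_{\text{dev}} \leq V^o$. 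Receiver optimality of $\mathbf{b}'$ holds by construction, so the modified profile is a PBE satisfying (NDOC-P).

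The central difficulty is the continuation-value comparison, because the two profiles prescribe different continuation strategies for both players and the agent's play also shapes belief evolution. The argument avoids tracking these details by exhibiting an explicit feasible benchmark in the original subgame whose value lower-bounds the pessimistic continuation: (NDOC) ensures the benchmark's induced beliefs never slip below $\delta_{\theta^*}$ in FOSD, and (MON) converts this into the required per-period flow bound.
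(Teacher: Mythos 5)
Your proposal is correct and follows essentially the same route as the paper's proof: copy the equilibrium on path, reset off-path beliefs to $\delta_{\min S(h_t)}$, have both players play myopically in the off-path subtrees (where (NDOC) freezes the degenerate belief), and use (MON) together with (NDOC) to show every deviation is weakly less attractive than in the original equilibrium. Your benchmark-strategy lower bound on the original continuation value is a careful spelling-out of the step the paper compresses into the remark that an off-path action ``generates a pointwise lower path of future reputation.''
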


\section{Two Types} \label{sec:twotypes}

This section explores the version of the model with only two types: $\Theta = \{L,H\}$. Here we show that signaling must take the form of attrition regardless of the sender's payoffs, as long as they are monotone in reputation $p_t$. The first part of Theorem \ref{thm:2types} states that \emph{perfect} separation cannot occur at any history in equilibrium: if a given action is on path for $\theta=H$ then it is also on path for $\theta=L$. 
This statement captures the idea of \citet{AP87} and \citet{NVD90}.
We also observe that there may effectively be only one such pooling action in any period, in the sense of all pooling actions must be payoff-equivalent for all types of the agent. This follows trivially from the fact that both types must be indifferent between playing any such action if there are more than one.

The insight that is novel (in the general setting) is that the converse to the first statement is not necessarily true: if $\alpha_L(a|h_t) > 0$ then $\alpha_H(a|h_t)$ may or may not be positive. In other words, there may exist actions which perfectly identify the low type, even if there do not exist any that identify the high type. It is immediate that the low type must be mixing for this to be possible. All this is summarized by the second part of the theorem. 
The statement does not claim existence of any such separating actions, since they, as previously mentioned, need not exist in any given case -- though Appendix B presents an example in which such an informative equilibrium exists.

\begin{theorem} \label{thm:2types}
	Suppose $\Theta = \{L,H\}$, (MON) holds, and $dt \to 0$. In any equilibrium $(\alpha,\mathbf{b},p)$ such that (NDOC-P) holds, at any $h_t \in \mathcal{H}$ with $S(h_t) = \{L,H\}$, and for any $a' \in A$: 
	\begin{enumerate}
		\item if $\alpha_H(a'|h_t) > 0$ then $\alpha_L(a'|h_t) > 0$. Further, all such $a'$ are payoff-equivalent in the sense that $V (a'|h_t, \mathbf{b}, \theta)$ is the same across such $a'$ for both types $\theta$.
		\item if $\alpha_H(a'|h_t) = 0$ and $\alpha_L(a'|h_t) > 0$ then $a' \in A^*(h_t, \mathbf{b}, L)$ and $V (a'|h_t, \mathbf{b}, L) =  V (a''|h_t, \mathbf{b}, L)$ for any $a''$ such that $\alpha_H(a''|h_t) > 0$. 
	\end{enumerate}
\end{theorem}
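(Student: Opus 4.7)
The plan for both parts is contradiction, using (NDOC) to control continuation beliefs after any action and (MON) to rank the resulting flow payoffs. Two of the subsidiary claims dispose of themselves via indifference: the payoff-equivalence claim in Part 1 follows because $H$ is indifferent across all actions in her support and, by the first (structural) claim of Part 1, $L$ is too; and the equal-value claim in Part 2 follows because for any pooling $a''$ with $\alpha_H(a''|h_t)>0$, the first claim of Part 1 forces $\alpha_L(a''|h_t)>0$, so $L$ mixes between $a'$ and $a''$ and values them equally. The substantive work is to prove the two main assertions.

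For the structural claim in Part 1, I would suppose for contradiction that $\alpha_H(a'|h_t)>0$ while $\alpha_L(a'|h_t)=0$ and construct a profitable deviation for $L$. Since only $H$ plays $a'$, Bayes gives $p(h_t,a',x_t)=\delta_H$, and (NDOC) locks the posterior at $\delta_H$ in every successor history. A deviating $L$ would then be treated as $H$ for the rest of the game and could, in particular, mimic $H$'s continuation strategy, enjoying flow payoffs of the form $\tilde{u}(\cdot,\delta_H,L)$ at every future node. In equilibrium, by contrast, every action in $L$'s support induces a posterior either equal to $\delta_L$ (if it separates) or strictly below $\delta_H$ in FOSD order (if it pools, since $L$ remains in the support), so the strict part of (MON) gives a flow payoff strictly below $\tilde{u}(\cdot,\delta_H,L)$ in every on-path period. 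Aggregated over the discounted horizon the strict per-period gap produces a strict deviation gain, contradicting equilibrium.

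For the main claim in Part 2 -- that $a'\in A^*(h_t,\mathbf{b},L)$ -- I would let $V^{\mathrm{sep}}_L$ denote $L$'s continuation value after playing $a'$ (well-defined since Bayes gives $\delta_L$ and (NDOC) maintains it), and for any alternative $a\in A$ decompose $V(a|h_t,\mathbf{b},L)=\tilde{u}(a,p(h_t,a,x_t),L)\,dt+e^{-rdt}g(a)$, where $g(a)$ is the induced continuation value. If $\alpha_H(a|h_t)=0$ -- whether $a$ is separating for $L$ on path, or off path and therefore governed by (NDOC-P) -- the induced posterior is $\delta_L$ and $g(a)=V^{\mathrm{sep}}_L=g(a')$, so optimality of $a'$ forces $a$'s flow payoff to be weakly below that of $a'$. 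If instead $\alpha_H(a|h_t)>0$, Part 1 gives $L$'s indifference between $a'$ and $a$, while the strict part of (MON) yields $g(a)>V^{\mathrm{sep}}_L=g(a')$ because the posterior keeps positive weight on $H$; this continuation-value gap forces $a$'s flow payoff to be strictly below that of $a'$. In either case $a$ is dominated in flow terms by $a'$, giving $a'\in A^*(h_t,\mathbf{b},L)$. The main technical care lies in ensuring that in the pooling case the continuation-value gap $g(a)-g(a')$ survives the limit $dt\to 0$, where individual-period flow effects shrink; this is where the strict part of (MON) together with the interior pooling belief does the work, since $L$'s benefit from delaying revelation accumulates over the discounted horizon and stays bounded away from zero independently of $dt$.
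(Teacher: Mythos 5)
Your Part 1 argument and both indifference claims follow the paper's proof essentially verbatim: the deviation to the $H$-only action locks the belief at $\delta_H$ by (NDOC), (MON) ranks the resulting flow payoffs, and the per-period gap dominates the $O(dt)$ flow term for small $dt$; the payoff-equivalence statements are exactly the mixing-indifference observations the paper makes. The one cosmetic difference is that you have the deviating $L$ ``mimic $H$'s continuation strategy,'' which makes the period-by-period comparison awkward because (MON) only ranks the \emph{same} action across beliefs; the paper instead has the deviator play the myopically optimal action at the frozen belief $\delta_H$, which cleanly dominates whatever $L$ plays on path. That is easily fixed.

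The genuine problem is in Case 2 of your Part 2 argument. You assert that for a pooling action $a$ the continuation gap $g(a)-g(a')$ is strictly positive and, in your closing paragraph, that it ``stays bounded away from zero independently of $dt$.'' This cannot be right, and it contradicts the indifference $V(a|h_t,\mathbf{b},L)=V(a'|h_t,\mathbf{b},L)$ that you correctly establish and use in the same breath: in the decomposition $V(a)=\tilde{u}(\cdot)\,dt+e^{-rdt}g(a)$ the flow terms differ by $O(dt)$, so an $O(1)$ gap in $g$ would make indifference impossible for small $dt$. In an attrition equilibrium $L$ is indifferent at $h_{t+dt}$ between continuing to pool and separating immediately, and separating immediately yields (up to $O(dt)$) exactly the $\delta_L$-continuation $g(a')$; so $g(a)-g(a')=O(dt)$, not bounded away from zero. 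Nor does the strict part of (MON) deliver what you want here -- as stated it only gives strictness relative to $\delta_\theta$, i.e.\ relative to $\delta_L$ for type $L$, not a strict ranking of an interior posterior against $\delta_H$ or of continuation values of optimizing agents. Fortunately your conclusion only needs the \emph{weak} inequality $g(a)\geq g(a')$, which does hold (at an interior belief $L$ can always secure at least the $\delta_L$-continuation, since by (MON) every equilibrium posterior weakly FOSD-dominates $\delta_L$), and weak inequality plus $V(a)=V(a')$ already forces the flow payoff of $a$ weakly below that of $a'$, which is all that $a'\in A^*(h_t,\mathbf{b},L)$ requires. The paper sidesteps this entirely by constructing an explicit deviation -- play a bliss action at $h_t$ and bliss actions forever after -- whose continuation is weakly better and whose flow payoff is strictly better whenever $a'\notin A^*(h_t,\mathbf{b},L)$; that route never requires comparing $g$ across pooling and separating actions. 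So: replace the strict claim by the weak one (or adopt the paper's deviation construction) and delete the final paragraph, whose premise is false.
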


Note that the attrition structure of signaling imposes strong restrictions on actions that can be played in equilibrium. Firstly, any separating action that perfectly identifies the low type must be myopically optimal for him, since the low type does not have any strategic incentives to play anything else. Secondly, if the low type mixes between pooling and separating, then he must be indifferent between the two: the gains from pooling (higher reputation) are exactly offset by the cost of taking suboptimal actions in current and/or future periods. 

It is worth emphasizing that the result holds under very minimal assumptions on payoffs and signals: the only requirements imposed on the model are that the sender's payoff is increasing in $p$ (which, in fact, is only required for the low type) and that the outcomes $x$ are not perfectly revealing. If the setting of interest fits this framework, then attrition is \emph{the only} informative equilibrium structure that can arise in this setting, unless one is willing to allow for NDOC-nonconformant beliefs off the equilibrium path. Under attrition, the high type is playing some pooling action, while the low type mixes between that and a separating action. 

One important case, which lies beyond the scope of our model, but is nonetheless worth mentioning, is that with a behaviorally committed type of the sender, and a strategic type, who prefers to mimic the committed type. For example, in the bargaining model of \citet{AG00}, a player may be either committed to rejecting all offers that give him anything less than the whole surplus, or fully strategic. Similarly, in the (static) cheap talk model of \citet{Chen11}, the sender may either be committed to truthful communication, or communicate strategically. Theorem \ref{thm:2types} applies to such problems (with the exception of the payoff equivalence part of statement 1), since its proof only relies on the incentives of the low type -- which in these settings is the strategic type. It follows that if the committed type is unable to verifiably demostrate his commitment, perfect separation is impossible in equilibrium, and all equilibria feature either attrition of the strategic type as in \citet{AG00}, or full pooling.

\section{Finite Types} \label{sec:finite}

We now move to exploring the setting with more than two but finitely many types. In this section we show that the insight of Theorem \ref{thm:2types} can be extended to this case, although allowing for many types does raise a number of additional issues and calls for extra assumptions.

\subsection{Single-Crossing} \label{sub:sc}

In order to secure the result in case of many types, we need to impose the following new assumption on payoffs:
\begin{trivlist}
	\item
	\begin{description}
		\item[(SC)] 
		For all optimal $\mathbf{b}$, any $\mathbf{a}',\mathbf{a}'' \in \cup_{h_t \in \mathcal{H}} \cup_{\theta \in S(h_t)} \arg \max_{\mathbf{a}} U (\mathbf{a}|h_t, \mathbf{b}, \theta)$, and all $h_t \in \mathcal{H}$, function $\mathcal{U}(\theta) \equiv U (\mathbf{a}'' | h_t, \mathbf{b}, \theta) - U (\mathbf{a}' | h_t, \mathbf{b}, \theta)$ either crosses zero at most once, or is identically zero.
	\end{description}
\end{trivlist}

This assumption belongs to a family of single-crossing conditions widely encountered in the literature on signaling, monotone comparative statics, and mechanism design.\footnote{See \cite{LM02} from a contract theory perspective (e.g., Ch. 2.2.3). Classic references on MCS,  in turn, include \citet{MS94} and \citet{Athey02}.} 
The purpose of our condition is standard: to ensure that the agent's preferred strategy is, in some sense, monotone w.r.t. his type. There are, however, some distinctive features that differentiate it slightly from other single-crossing conditions in the literature. 

Firstly, (SC) is a condition on the expectation of a discounted sum $\mathbb{E} \sum_t e^{-rt} u(a_t,b_t,\theta)$ rather than on the flow utility $u(a,b,\theta)$. While the latter would be more preferable, aggegating single-crossing is not a trivial problem. \citet{QS12} discuss this problem and offer possible solutions, but none of them apply to our setting.
Secondly, (SC) is more demanding than might appear initially. The dependence of $U(\mathbf{a} | h,\mathbf{b},\theta)$ on $\mathbf{a}$ realizes not only directly -- through the effect of agent's own action $a_t$ on his flow utility $u(a_t,b_t,\theta)$ -- but also via an indirect reputation channel. The receiver's response $b_t$ depends on agent's reputation $p_t$, which is, in general, affected by the agent's action choice $a_t$. Further, this reputation effect is persistent, with the choice of $a_t$ affecting not only the contemporaneous response $b_t$, but also the continuation payoff: for a given fixed path $\{a_s,x_s\}_{s>t}$, reputation $\{p(h_s)\}_{s>t}$ will be persistently shifted by $a_t$, meaning the receiver's responses $\{b_s\}_{s>t}$ are affected.

All of the above means that (SC) is a non-trivial condition and may be difficult to verify in some settings. If anything, verifying (SC) might as well be the main impediment to exploiting this paper's results in applied models. However, this task is far from impossible, with Section \ref{sec:app} demonstrating a number examples of applied models that can be easily verified to satisfy (SC).

\subsection{Attrition Structure of Equilibrium Signaling} \label{sub:attrn}

Theorem \ref{thm:main} that we gradually build up to is the analog of Theorem \ref{thm:2types} for the case when $|\Theta| > 2$, in the sense of characterizing the actions available in equilibrium at any history.
We begin, however, by stating a weaker result which, by looking at strategies rather than actions, provides a clearer characterization of the attrition structure of equilibrium signaling with $|\Theta| > 2$. Proposition \ref{prop:attrn} below establishes that as long as (SC) and other previously stated assumptions hold, strategies played in an arbitrary equilibrium of the game can be split into two classes. The first class consists of pooling strategies played by all types. While a nominal multiplicity of such strategies may arise, they must all be payoff-equivalent, so this class is, in a sense, degenerate. The second class is that of separating strategies employed by the lowest type -- these may vary in which pooling strategies they mimic and for how long. However, any separating strategy is only played by the lowest type.

To state this and other results we need to introduce some additional notation and definitions. Firstly, denote the two boundaries of the belief support at a given history $h_t$ as $\bar{S}(h_t) \equiv \max S(h_t)$ and $\underline{S}(h_t) \equiv \min S(h_t)$ respectively.
Furthermore, in a manner similar to type support $S$, given an equilibrium strategy profile $\{\alpha_\theta\}$, let us define action support as 
\begin{align*}
	A(h_t) \equiv\cup_{\theta \in S(h_t)} \left\{ a \in A \mid \alpha_\theta(a|h_t) > 0 \right\}.
\end{align*}
We say that given the receiver's strategy $\mathbf{b}$, a pure strategy $\mathbf{a}$ \emph{arrives at history $h_t$} -- and denote it as $\mathbf{a} \curlywedge h_t$ -- if $\mathbf{a}(h_{\tau}) = a_{\tau}(h_\tau)$ for all $h_{\tau}$ s.t. $h_t \succ h_\tau$.
Further, say that $\mathbf{a}$ is \emph{on path for $\theta$ at $h_t$} if $\mathbf{a} \curlywedge h_t$ and $\mathbf{a}$ is on path according to type $\theta$'s equilibrium strategy $\alpha$ starting from $h_t$: $\alpha_\theta(\mathbf{a}(h_t)|h_t) > 0$. Say that $\mathbf{a}$ is \emph{on path at $h_t$} if it is on path at $h_t$ for some $\theta \in S(h_t)$. 

We proceed by defining payoff equivalence of strategies in a straightforward manner.
\begin{definition}
	Fix an equilibrium $(\alpha,\mathbf{b},p)$ and history $h_t$. Any two pure strategies $\mathbf{a}',\mathbf{a}'' \curlywedge h_t$ are:
	\begin{itemize}
		\item \emph{payoff-distinct at $h_t$} if there exists $\theta \in S(h_t)$ such that $U(\mathbf{a}'|h_t,\mathbf{b},\theta) \neq U(\mathbf{a}''|h_t,\mathbf{b},\theta)$;
		\item \emph{payoff-equivalent at $h_t$} if they are not payoff-distinct at $h_t$.
	\end{itemize}
\end{definition}

The result can then be stated as follows.

\begin{proposition} \label{prop:attrn}
	Suppose the payoff function $u$ satisfies (MON) and $dt \to 0$.
	Fix an equilibrium $(\alpha,\mathbf{b},p)$ such that (NDOC-P) and (SC) hold. Fix some history $h_t \in \mathcal{H}$ and define $\underline{\theta} \equiv \underline{S} (h_t)$. Then for any pure strategy $\bar{ \mathbf{a}}'$ on path at $h_t$ for some type $\theta' \in S(h_t) \backslash \underline{ \theta}$, the following hold:
	\begin{enumerate}
		\item $\bar{ \mathbf{a}}'$ is optimal for all $\theta \in S(h_t)$ at $h_t$;
		\item any $\bar{ \mathbf{a}}''$ optimal for any $\theta'' \in S(h_t) \backslash \underline{ \theta}$ is payoff-equivalent at $h_t$ to $\bar{ \mathbf{a}}'$;
		\item there exists $\bar{ \mathbf{a}}'''$ that is payoff-equivalent at $h_t$ to $\bar{ \mathbf{a}}'$ and is on path for $\underline{ \theta}$ at $h_t$;
		\item any $\underline{\mathbf{a}}$ that is on path at $h_t$ and payoff-distinct at $h_t$ from $\bar{ \mathbf{a}}'$ is only on path for $\underline{ \theta}$.
	\end{enumerate}
\end{proposition}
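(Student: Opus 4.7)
My plan is to prove the four statements jointly, extending the attrition logic of Theorem \ref{thm:2types} by combining (NDOC-P)-based deviation bounds with (SC) to control preferences across intermediate types.

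The core ingredient is an \emph{upward-pooling lemma}: at any history $h_t$, every type $\theta \in S(h_t)$ strictly below $\bar{S}(h_t)$ must place positive probability on at least one action played with positive probability by some type strictly above $\theta$. The proof is a one-shot deviation argument in the spirit of \citet{AP87}: if $\theta$'s on-path actions were all avoided by higher types, Bayes and (NDOC) would cap the posterior along $\theta$'s path by FOSD from above at $\delta_\theta$, so by (MON) the equilibrium continuation is bounded by the corresponding bliss value. Mimicking an action on path for some $\theta^+ > \theta$ and then playing off-path bliss yields, via (NDOC-P) and the strict part of (MON), a continuation strictly above this bound, while the one-shot mimicking cost vanishes as $dt\to 0$. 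Applied with $\theta=\underline{\theta}$, this lemma delivers Statement 3: $\underline{\theta}$ must place positive weight on some pooling action, and the corresponding strategy will be forced to be payoff-equivalent to $\bar{\mathbf{a}}'$ by Statements 1--2.

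Statement 1 is proved by contradiction. If some $\theta^*\in S(h_t)$ strictly prefers a strategy $\bar{\mathbf{a}}^*$ to $\bar{\mathbf{a}}'$, iterating the upward-pooling lemma forces both $\bar{\mathbf{a}}^*$ and $\bar{\mathbf{a}}'$ on path for $\bar{S}(h_t)$, making the top type indifferent between them. Applying (SC) to the pair $(\bar{\mathbf{a}}^*,\bar{\mathbf{a}}')$ jointly with pairs involving $\underline{\theta}$'s on-path separating strategy, together with the (NDOC-P)-based lower bounds on $U(\bar{\mathbf{a}}^i|h_t,\mathbf{b},\theta)$ for each pooling strategy, produces an over-constrained sign pattern on the relevant differences that violates single-crossing. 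Statements 2 and 4 follow routinely: Statement 2 because any $\bar{\mathbf{a}}''$ optimal for some $\theta''>\underline{\theta}$ must be played on path by $\theta''$ (by optimality and the upward-pooling lemma), so Statement 1 applies to $\bar{\mathbf{a}}''$ as well, and both $\bar{\mathbf{a}}''$ and $\bar{\mathbf{a}}'$ attain $V(\cdot,\mathbf{b},\theta)$ at every $\theta \in S(h_t)$; Statement 4 is the direct contrapositive of Statement 1.

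The main obstacle is twofold. In the upward-pooling lemma the deviation target must be chosen so that the post-mimicking posterior does not include $\underline{\theta}$ (otherwise (NDOC-P) punishes down to $\delta_{\underline{\theta}}$ rather than to some $\delta_{\theta^+}$ with $\theta^+ > \theta$), which can be handled by picking the mimicked action from the support of some type above $\underline{\theta}$'s played actions. In Statement 1, the sign-pattern argument must combine (SC) across multiple strategy pairs --- applied to the single pair $(\bar{\mathbf{a}}^*, \bar{\mathbf{a}}')$ it allows patterns such as $(+,-,0)$ across three types that are compatible with single-crossing, so the (MON)-based continuation bounds obtained from the mimicking-and-bliss deviation are essential to force the additional sign change that rules out ``disjoint pools joined at the top type'' configurations.
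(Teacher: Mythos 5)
Your ``upward-pooling lemma'' is the load-bearing step, and it does not hold for intermediate types by the argument you sketch. The one-shot mimicking deviation is only profitable if the post-mimicking reputation path FOSD-dominates the on-path one, and after the first period (NDOC-P) collapses the deviator's reputation to $\delta_{\min S(h_t,a^+)}$, where $a^+$ is the mimicked action. Your proposed fix --- choose $a^+$ so that its support excludes all types below $\theta$ --- presupposes that such an action exists, which it need not. The problematic configuration is concrete: with $\Theta=\{1,2,3\}$ all in the support, suppose types $1$ and $3$ pool on one action while type $2$ plays a different action alone, ending up at $\delta_2$ forever. Every action played by a type above $2$ also has type $1$ in its support, so mimicking it and then playing bliss lands type $2$ at $\delta_1$, which is worse by (MON); and if the pool's posterior puts enough weight on type $1$, staying in the pool is also worse for type $2$ than $\delta_2$. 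No deviation of your form is profitable, so (MON) and (NDOC-P) alone cannot rule this out --- note that the paper's Lemma \ref{lem:supp1}, which is the correct (MON)-only statement, deliberately permits exactly this ``overlapping supports'' pattern. Consequently the iteration in your Statement 1, which chains shared actions upward to force both candidate strategies on path for $\bar{S}(h_t)$, breaks at the first intermediate type that separates alone; and your (SC) sign-pattern argument cannot substitute for it, because without knowing that the top and bottom types already share a payoff-equivalent strategy you only get inequalities of the form $U(\mathbf{a}_\theta\mid\theta)\geq U(\mathbf{a}_{\theta'}\mid\theta)$ relating each type to \emph{its own} strategy, which do not pin down the sign of $\mathcal{U}$ at a common pair of strategies across three types.

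The paper closes exactly this gap with a different architecture: Lemma \ref{lem:supp1} gives only the weak overlap property $\bar{S}(h_t,a')\geq\underline{S}(h_t,a'')$; Lemma \ref{lem:onion1} is the (SC) sandwich for intermediate types; and the crucial Lemma \ref{lem:onion2} establishes that $\bar{S}(h_t)$ and $\underline{S}(h_t)$ possess payoff-equivalent on-path strategies via a double induction --- on $|S(h_t)|$ and forward in time to the first ``splitting'' history, where the continuation supports strictly shrink, the induction hypothesis applies to each branch, and Lemma \ref{lem:supp1} guarantees the two branches' supports interleave so that Lemma \ref{lem:onion1} can glue the payoff-equivalences together. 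Your proposal contains no analogue of this induction, and it is precisely what is needed to exclude the ``middle type separates alone'' configurations. A secondary, smaller issue: even where your lemma is valid (at $\theta=\underline{\theta}$), sharing a single period-$t$ action with a higher type does not by itself yield a strategy for $\underline{\theta}$ that is \emph{payoff-equivalent} to $\bar{\mathbf{a}}'$, so Statement 3 also needs the Lemma \ref{lem:onion2}-style argument rather than following from Statements 1--2 plus action overlap.
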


To understand this proposition, it is illustrative to ignore payoff equivalence for a second and treat any pair of payoff-equivalent strategies as the same strategy. In this reading, the proposition implies that any pure strategy $\mathbf{a}$ on path for some type $\theta \in S(h_t)$ is on path for \emph{all} types $\theta \in S(h_t)$, including the currently-lowest type $\underline{ \theta}$. Therefore, no type of the agent can ever conclusively separate from $\underline{ \theta}$. At the same time, there may exist strategies that separate $\underline{ \theta}$ away from the remaining types. The weight that the receiver's belief assigns to $\underline{ \theta}$ may thus decrease over time along the pooling path of play, it may even converge to zero asymptotically as $t \to \infty$, but it may never become exactly zero.
However, the interpretation above is overly strong, since payoff-equivalent strategies do not need to coincide at all histories. In other words, it is a statement about \emph{strategies}, whereas we would like to have a result about \emph{actions}.

\subsection{From Strategies to Actions}

The question that remains unanswered by Proposition \ref{prop:attrn} is whether payoff-equivalence implies strategy-equivalence (i.e., that any two payoff-equivalent strategies must prescribe the same actions at either all, or at least some histories) or, if not, whether payoff-equivalent strategies at least produce equivalent belief paths for all types. Unfortunately, the answer to both of the above is negative: in general, not only may there be multiple payoff-equivalent strategies, but they may even induce different beliefs. This is demonstrated by the following example.

\begin{example}
	Suppose $\Theta = \{0,1,2\}$, types are ex ante equiprobable, $A = \mathbb{R}_+$, and outcomes are uninformative. 
	Suppose the sender's reduced-form utility function \eqref{eq:utilde} is given by $\tilde{u}(a,p,\theta) = \mathbb{E}_p(\theta)$ (so agent's actions are cheap talk; (SC) holds trivially in this scenario).
	Then the following strategies constitute an equilibrium together with respective beliefs: type $\theta=2$ plays strategy $\mathbf{a}'' = (a'',0,0,...)$, while types $\theta=1,3$ play $\mathbf{a}'=(a',0,0,...)$, where $a'\neq a''$ are arbitrary. 
	In this PBE some information about type is conveyed in period zero -- namely, type $\theta=2$ separates from $\theta=1,3$. However, all types of the sender are indifferent between the two strategies, hence information revealed by $a_0$ is not relevant to the sender's payoff -- although it may be relevant for the receiver. 
\end{example}

However, we are arguably more interested in \emph{payoff-relevant} signaling, which relies on the heterogeneity of the agent's preferences across types to convey information, as opposed to the agent's utmost indifference. Narrowing the focus to such payoff-relevant information revelation allows to carry the insight of Proposition \ref{prop:attrn} over from strategies to actions. We begin by stating the formal definitions of payoff-relevant and irrelevant signaling in our setting.

\begin{definition}
	Fix an equilibrium $(\alpha,\mathbf{b},p)$ and history $h_t \in \mathcal{H}$. 
	\begin{itemize}
		\item \emph{Payoff-relevant signaling} happens at $h_t$ if there exist $a',a'' \in A(h_t)$ and $\theta \in S(h_t)$ such that $V (a'|h_t, \mathbf{b}, \theta) \neq V (a''|h_t, \mathbf{b}, \theta)$.
		
		\item \emph{Payoff-irrelevant signaling} happens at $h_t$ if there exist $a',a'' \in A(h_t)$ such that $p(h_t,a',x) \neq p(h_t,a'',x)$ for some $x \in X$ but $V (a'|h_t, \mathbf{b}, \theta) = V (a''|h_t, \mathbf{b}, \theta)$ for all $\theta \in S(h_t)$.
	\end{itemize}
\end{definition}

In other words, payoff-relevant signaling implies that at a given history $h_t$ there are two distinct actions on path, $a'$ and $a''$, and there is some type of the agent for which the choice between these two actions has payoff consequences. Note that since both actions are on path, it cannot be the case that all types prefer one over another -- both $a'$ and $a''$ must be optimal for some types of the agent. Payoff-relevance of this action choice is then defined as some type $\theta \in S(h_t)$ having strict preference between the two.

We are now ready to state the theorem that characterizes payoff-relevant signaling in terms of actions, making the implications of Proposition \ref{prop:attrn} more explicit. The result below expands the message obtained in Theorem \ref{thm:2types} to the case of finitely many types, albeit at the cost of restricting model scope to payoff functions that satisfy (SC) and to the continuous-time limit of the model (as opposed to any sufficiently small $dt$).

\begin{theorem} \label{thm:main}
	Suppose the payoff function $u$ satisfies (MON) and $dt \to 0$.
	Fix an equilibrium $(\alpha,\mathbf{b},p)$ such that (NDOC-P) and (SC) hold. Fix some history $h_t \in \mathcal{H}$.
	If payoff-relevant signaling happens at $h_t$ then, defining $\underline{\theta} \equiv \underline{S} (h_t)$, the following hold:
	\begin{enumerate}
		\item any on-path action $a \in A(h_t)$ is on path for $\underline{ \theta}$ at $h_t$;
		\item $A(h_t) \cap A^*\left( h_t, \mathbf{b}, \underline{\theta} \right)$ is nonempty, and any $\underline{a}$ in the intersection is on path only for $\underline{ \theta}$ at $h_t$;
		\item any action $\bar{a} \in A(h_t) \backslash A^*\left( h_t, \mathbf{b}, \underline{\theta} \right)$ is optimal at $h_t$ for all $\theta \in S(h_t)$.
	\end{enumerate}
\end{theorem}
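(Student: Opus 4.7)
The plan is to deduce Theorem \ref{thm:main} from Proposition \ref{prop:attrn} by translating the strategy-level characterization there into a statement about the action support $A(h_t)$, relying on Bayes' rule, (MON), and the $dt \to 0$ limit. Throughout, fix an equilibrium and history $h_t$ satisfying the hypotheses, with $\underline{\theta} \equiv \underline{S}(h_t)$. Payoff-relevant signaling guarantees at least two payoff-distinct on-path actions; by Proposition \ref{prop:attrn}(4), strategies payoff-distinct from any pooling reference are played only by $\underline{\theta}$, so at least one higher type $\theta' > \underline{\theta}$ must be playing an action that initiates a strategy $\bar{\mathbf{a}}'$ in the ``pooling'' class. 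I would then partition $A(h_t)$ into \emph{pooling} actions (those initiating a pure strategy payoff-equivalent to $\bar{\mathbf{a}}'$ at $h_t$) and \emph{separating} actions (those initiating strategies payoff-distinct from $\bar{\mathbf{a}}'$).

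I would prove Part 3 first, since it follows most directly. Any $\bar{a} \in A(h_t) \setminus A^*(h_t,\mathbf{b},\underline{\theta})$ is not myopically optimal for $\underline{\theta}$, hence (given Part 2) cannot be separating, so it initiates a pooling strategy $\bar{\mathbf{a}}$. Proposition \ref{prop:attrn}(1) states $\bar{\mathbf{a}}$ is optimal for every $\theta \in S(h_t)$ at $h_t$, so $\bar{a}$ itself is an optimal opening action for all such types. Part 1 splits into two cases: separating actions are on path only for $\underline{\theta}$ directly from Proposition \ref{prop:attrn}(4), so the content is in the pooling case. Suppose for contradiction that a pooling action $\bar{a}$ is played by some $\theta' > \underline{\theta}$ but $\alpha_{\underline{\theta}}(\bar{a}|h_t) = 0$. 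Bayes' rule then gives $p(h_t,\bar{a},x)(\underline{\theta}) = 0$, and by (NDOC) this exclusion persists. Comparing with another pooling action $a$ that $\underline{\theta}$ does play (which exists by Proposition \ref{prop:attrn}(3) applied to the pooling class), the reputations after the two actions differ by a persistent $O(1)$ gap in FOSD order; (MON) then implies $V(\bar{a}|h_t,\mathbf{b},\theta') > V(a|h_t,\mathbf{b},\theta')$ as $dt \to 0$, contradicting the payoff-equivalence of pooling continuations for $\theta'$ furnished by Proposition \ref{prop:attrn}(2).

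For Part 2, a separating action $\underline{a} \in A(h_t)$ exists by payoff-relevance together with Proposition \ref{prop:attrn}(4). I claim $\underline{a} \in A^*(h_t,\mathbf{b},\underline{\theta})$. If not, pick $a^* \in A^*(h_t,\mathbf{b},\underline{\theta})$ and consider $\underline{\theta}$'s deviation to $a^*$. If $a^*$ is off path, (NDOC-P) pins the post-deviation belief at $\delta_{\underline{\theta}}$, matching the belief induced by the revealing action $\underline{a}$; continuations from $h_{t+dt}$ onward coincide at myopic play under $\delta_{\underline{\theta}}$, so the deviation strictly improves by its $O(dt)$ flow advantage, a contradiction. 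If $a^*$ is on path, then in the $dt \to 0$ attrition regime $\underline{\theta}$'s pooling continuation value at $h_{t+dt}$ must equal $\bar{u}^*(\underline{\theta})/r$ at leading order (otherwise pooling strictly dominates separating for $\underline{\theta}$ and $\underline{a}$ cannot be on path), so again the deviation yields a strict $O(dt)$ gain, contradiction. The uniqueness claim---any $a \in A(h_t) \cap A^*(h_t,\mathbf{b},\underline{\theta})$ is on path only for $\underline{\theta}$---follows because if such $a$ were also played by $\theta'' > \underline{\theta}$, it would be pooling, and then (MON) combined with (NDOC) would make the pooling continuation strictly preferred by $\underline{\theta}$ over separating, contradicting $\underline{a}$'s on-pathness.

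The main obstacle is the on-path sub-case of Part 2: justifying that, under ongoing attrition with payoff-relevant signaling, the pooling continuation value for $\underline{\theta}$ collapses to the separating value $\bar{u}^*(\underline{\theta})/r$ at leading order requires combining the attrition indifference condition with (MON), (SC), and the continuous-time limit. The role of (SC) is subtle but essential here: it precludes higher types from also finding separating strategies optimal, which would otherwise muddle the clean partition into pooling and separating actions that drives the argument.
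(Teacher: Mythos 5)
Your overall route --- reducing the theorem to Proposition \ref{prop:attrn} and then handling the separating action with a deviation argument under (NDOC-P), (MON) and Lemma \ref{lem:obs1} --- is the same as the paper's. But two steps, as you execute them, have genuine problems. First, in Part 1 you compare a pooling action $\bar a$ (played by some $\theta'$ but not by $\underline\theta$) with another pooling action $a$ that $\underline\theta$ does play, and assert a ``persistent $O(1)$ gap in FOSD order'' between the induced posteriors. When $|S(h_t)|>2$ the two posteriors have different but overlapping supports and need not be FOSD-ranked at all (e.g., $\bar a$ could isolate a middle type while $a$ pools the lowest and highest types), so (MON) gives you nothing directly. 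The paper does not argue this way: Part 1 is read off Proposition \ref{prop:attrn}, whose own proof already forces action-level coincidence of payoff-equivalent on-path strategies via Lemma \ref{lem:supp1} (two such strategies cannot prescribe different actions at any history with $|S|>1$).

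Second, and more importantly, the on-path sub-case of your Part 2 --- which you yourself flag as ``the main obstacle'' --- is left unresolved, and the machinery you propose for it (the pooling continuation value collapsing to $\bar u^*(\underline\theta)/r$ at leading order) is not needed. The paper's argument is a one-line deviation: if $\underline a\notin A^*(h_t,\mathbf b,\underline\theta)$, have $\underline\theta$ play some bliss action $a_s\in A^*(h_s,\mathbf b,\underline\theta)$ at $h_t$ and at every subsequent history. Relative to $\underline{\mathbf a}$ (which reveals $\underline\theta$ at $t$ and, by Lemma \ref{lem:obs1}, plays bliss actions thereafter), this deviation yields a weakly higher reputation path --- hence by (MON) a weakly higher continuation --- and a strictly higher flow payoff at $t$, regardless of whether the bliss action is on or off path; no case split and no attrition-indifference condition (that is Corollary \ref{cor:repsig}, a consequence of the theorem, not an input to it) is required. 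The remaining pieces of your argument --- non-emptiness of $A(h_t)\cap A^*$, exclusivity of separating actions to $\underline\theta$, and Part 3 via ``$\bar a\notin A^*$ implies $\bar a$ is pooling'' --- do match the paper's proof, provided you also establish explicitly (as the paper does) that no pooling action can lie in $A^*(h_t,\mathbf b,\underline\theta)$: otherwise pooling on it forever would strictly dominate separating for $\underline\theta$ by the strict part of (MON), contradicting $\underline{\mathbf a}$ being on path.
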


What the theorem says is that in any equilibrium with payoff-relevant signaling, there are effectively at most two types of \emph{actions} -- as opposed to \emph{strategies} in Proposition \ref{prop:attrn} -- on path at any history: pooling actions (typical element $\bar{a}$) and separating actions (typical element $\underline{a}$). The latter are only ever played by the currently-lowest type $\theta = \underline{S} (h_t)$ and separate him from the remaining types. As in Theorem \ref{thm:2types}, any separating action must be myopically optimal for the lowest type given that he is revealed. 

Pooling actions, on the other hand, are optimal for all types. Further, if no payoff-irrelevant signaling takes place, then any pooling action is, in fact, on path for all $\theta \in S(h_t)$ -- i.e., all types do actually pool on the pooling action(s). Notably, both payoff-relevant and payoff-irrelevant signaling may occur simultaneously at a given history. In that case there will be more than one pooling action, and while all of them are necessarily on path for $\underline{\theta}$, the higher types may vary in their action choices, despite all types being indifferent between all of these pooling actions. 

The corollary below relates to the situations when payoff-relevant signaling occurs at successive histories. It states that the pooling action in the earlier history must then be such that the low type is indifferent between separating and pooling -- meaning that \emph{flow payoffs} the low type gets from the separating and pooling actions must be the same. The low type must be indifferent between separating at $t$ and $t+dt$, so one period of pooling must be exactly as attractive as one period of being identified as $\underline{ \theta}$. In practice, this means that pooling action must be costlier for $\underline{ \theta}$ than the separating action, since the former yields higher reputation payoff.

\begin{corollary} \label{cor:repsig}
	Suppose the conditions in Theorem \ref{thm:main} hold. Suppose payoff-relevant signaling occurs also at $h_{t+dt} \equiv (h_t, (\bar{a},x,\mathbf{b}_t))$ for some $\bar{a}$ and all $x$ in the support. Then such $\bar{a}$ must satisfy $\mathbb{E}_x \left[ \tilde{u}(\bar{a},p(h_t, (\bar{a},x)),\underline{\theta}) | \underline{ \theta} \right] = \tilde{u}(\underline{a}, \delta_{\underline{\theta}},\underline{\theta})$, where $\tilde{u}(a,p,\theta)$ is defined by \eqref{eq:utilde}.
\end{corollary}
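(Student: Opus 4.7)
The plan is to exploit the indifference of $\underline{\theta}$ between pooling via $\bar{a}$ and separating via a myopic best response $\underline{a}$, and then to unravel the continuation value using the assumption that payoff-relevant signaling also occurs at $h_{t+dt}$. First, I would invoke Theorem \ref{thm:main} at $h_t$ to obtain a separating action $\underline{a} \in A(h_t) \cap A^*(h_t, \mathbf{b}, \underline{\theta})$, on path only for $\underline{\theta}$, together with the pooling action $\bar{a}$, on path for all types in $S(h_t)$ including $\underline{\theta}$. Optimality of both actions for $\underline{\theta}$ yields the indifference relation $V(\bar{a} \mid h_t, \mathbf{b}, \underline{\theta}) = V(\underline{a} \mid h_t, \mathbf{b}, \underline{\theta})$.

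Next, I would evaluate the separating side. Since $\underline{a}$ is on path only for $\underline{\theta}$, (NDOC-P) forces the receiver's belief after $\underline{a}$ to degenerate at $\underline{\theta}$; by (NDOC) the support remains $\{\underline{\theta}\}$ thereafter, so $\underline{\theta}$ plays a myopic best response forever, obtaining constant flow $\tilde{u}(\underline{a}, \delta_{\underline{\theta}}, \underline{\theta})$ per period. Denoting the resulting discounted value by $V^{\mathrm{sep}} \equiv \tilde{u}(\underline{a}, \delta_{\underline{\theta}}, \underline{\theta}) \, dt /(1 - e^{-r dt})$, one gets $V(\underline{a} \mid h_t, \mathbf{b}, \underline{\theta}) = V^{\mathrm{sep}}$, and the left-hand side can be expanded as $V(\bar{a} \mid h_t, \mathbf{b}, \underline{\theta}) = \mathbb{E}_x[\tilde{u}(\bar{a}, p(h_t,\bar{a},x), \underline{\theta})] \, dt + e^{-r dt}\, \mathbb{E}_x[V(h_{t+dt}, \mathbf{b}, \underline{\theta}) \mid \underline{\theta}]$.

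The recursive step is to show $V(h_{t+dt}, \mathbf{b}, \underline{\theta}) = V^{\mathrm{sep}}$ for every $x$ in the support. Since $\bar{a}$ is on path for $\underline{\theta}$ and outcomes never perfectly identify type, $\underline{\theta} \in S(h_{t+dt})$, while by (NDOC) the support can only shrink; hence $\underline{S}(h_{t+dt}) = \underline{\theta}$. Applying Theorem \ref{thm:main} at $h_{t+dt}$ (using the hypothesized payoff-relevant signaling there) gives the existence of a separating action on path for $\underline{\theta}$, and the same calculation as above identifies its value with $V^{\mathrm{sep}}$; indifference of $\underline{\theta}$ between on-path actions at $h_{t+dt}$ then delivers $V(h_{t+dt}, \mathbf{b}, \underline{\theta}) = V^{\mathrm{sep}}$. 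Plugging this in yields $V^{\mathrm{sep}} = \mathbb{E}_x[\tilde{u}(\bar{a}, p(h_t,\bar{a},x), \underline{\theta}) \mid \underline{\theta}] \, dt + e^{-r dt} V^{\mathrm{sep}}$; rearranging, using $(1-e^{-r dt})/dt \to r$ as $dt \to 0$, and noting that $r V^{\mathrm{sep}} \to \tilde{u}(\underline{a}, \delta_{\underline{\theta}}, \underline{\theta})$ in the same limit, produces the claimed equality.

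The main obstacle is the recursive step: one must confirm that $\underline{\theta}$ remains the bottom of the support at $h_{t+dt}$ for \emph{every} outcome $x$ in the support, so that Theorem \ref{thm:main} applies uniformly and the indifference carries over inside the expectation. This is exactly where the non-identifiability of outcomes and (NDOC) combine with part 3 of Theorem \ref{thm:main} (which puts $\underline{\theta}$ on the pooling path) to keep the attrition structure intact at the next history, allowing $V^{\mathrm{sep}}$ to be pulled out of the conditional expectation.
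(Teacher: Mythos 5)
Your proposal is correct and follows essentially the same route as the paper: the paper's own proof is a two-sentence statement of exactly this indifference argument (the low type must be indifferent between separating at $t$ and pooling at $t$ then separating at $t+dt$, so the one-period flow payoffs must coincide), and your write-up simply unrolls that via the Bellman decomposition and verifies the recursive step explicitly. The additional care you take to confirm $\underline{S}(h_{t+dt})=\underline{\theta}$ for every outcome $x$ in the support, so that Theorem \ref{thm:main} applies at $h_{t+dt}$ and the separating value can be pulled out of the expectation, is a detail the paper leaves implicit but is entirely consistent with its argument (and note the resulting equality is exact for every $dt$, so the final $dt\to 0$ limit is not actually needed).
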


Finally, Theorem \ref{thm:main} applies to all histories, including those off the equilibrium path. Applying it inductively starting from the root history, we obtain Corollary \ref{cor:supp} below, which states that in the absence of payoff-irrelevant signaling, only the lowest type $L \equiv \min \Theta$ can ever separate from the rest, while the remaining ones can never separate from one another.

\begin{corollary} \label{cor:supp}
	Suppose the conditions in Theorem \ref{thm:main} hold. 
	In any equilibrium in which no payoff-irrelevant signaling happens, for any on-path history $h_t$, one of the following must hold: 
	\begin{enumerate}
		\item $S(h_t) = \Theta$;
		\item $S(h_t) = \{\min \Theta\}$.
	\end{enumerate}
\end{corollary}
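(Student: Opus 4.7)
The plan is to prove the claim by induction on on-path histories $h_t$, showing that $S(h_t)\in\{\Theta,\{\min\Theta\}\}$ at every such history. The base case is $S(h_0)=\Theta$, assuming (as is implicit) that the prior $p_0$ has full support on $\Theta$. For the inductive step, I fix an on-path history $h_{t+dt}=(h_t,(a_t,x_t,b_t))$ and suppose the claim holds at $h_t$; the goal is to determine $S(h_{t+dt})$.

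If $S(h_t)=\{\min\Theta\}$, then (NDOC) yields $S(h_{t+dt})\subseteq\{\min\Theta\}$, and since $a_t$ is on path the remaining type must keep positive posterior, giving $S(h_{t+dt})=\{\min\Theta\}$. The substantive case is $S(h_t)=\Theta$, in which $\underline{\theta}=\min\Theta$. I split on whether payoff-relevant signaling occurs at $h_t$. If it does not, then since by hypothesis no payoff-irrelevant signaling occurs either, every pair of on-path actions induces identical posteriors at every outcome; combined with the modelling assumption that $\mathrm{supp}(x_t\mid h_t,a_t,\theta)$ is independent of $\theta$ this gives $S(h_{t+dt})=\Theta$. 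If payoff-relevant signaling does occur, Theorem \ref{thm:main} partitions $A(h_t)$ into separating actions (those in $A^*(h_t,\mathbf{b},\underline{\theta})$, on path only for $\underline{\theta}$ by part 2) and pooling actions (those in $A(h_t)\setminus A^*(h_t,\mathbf{b},\underline{\theta})$, on path for every $\theta\in S(h_t)$ by part 3 together with the remark immediately following the theorem regarding the absence of payoff-irrelevant signaling). A separating $a_t$ sends the posterior to $\delta_{\underline{\theta}}$, giving $S(h_{t+dt})=\{\min\Theta\}$; a pooling $a_t$ preserves every type in the support, giving $S(h_{t+dt})=\Theta$.

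The step I expect to be most delicate is justifying that, absent payoff-irrelevant signaling, each pooling action is actually played with positive probability by every $\theta\in S(h_t)$, not merely that pooling actions are optimal for every type (which is all Theorem \ref{thm:main} part 3 states verbatim). The argument proceeds in two parts: first, any higher type $\theta'\in S(h_t)\setminus\{\underline{\theta}\}$ must play some on-path action, and by part 2 it cannot be separating, so $\theta'$ must place positive weight on at least one pooling action; second, if two pooling actions induced different Bayesian posteriors at some outcome, that would constitute payoff-irrelevant signaling (since pooling actions are mutually payoff-equivalent), which is ruled out by hypothesis, so all pooling actions induce the same posterior, which forces the mixing weights across types to be proportional across pooling actions and thus guarantees that every $\theta\in S(h_t)$ places positive probability on every on-path pooling action. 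With this in hand, Bayes updating after any on-path pooling action preserves $S(h_t)$, the induction closes, and the corollary follows by applying it to every on-path history.
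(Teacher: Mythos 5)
Your proof is correct, and its skeleton is the same as the paper's: iterate forward from $h_0$ (where $S(h_0)=\Theta$), classify the period-$t$ action via Theorem \ref{thm:main} as separating (collapsing the support to $\{\min\Theta\}$) or pooling, and use the fact that outcomes $x_t$ never shrink the support. The one substantive difference is how you close the delicate sub-step -- that a pooling action actually preserves the \emph{full} support rather than merely being optimal for every type. The paper gets this from Proposition \ref{prop:attrn}: absent payoff-irrelevant signaling, the class of payoff-equivalent on-path strategies for the higher types must be a singleton (two distinct such strategies would first diverge at some $h_s \succ h_t$ where they are still payoff-equivalent, which is exactly payoff-irrelevant signaling at $h_s$), so all types in $S(h_t)\setminus\{\underline\theta\}$ play one common action with probability one, and part 1 of Theorem \ref{thm:main} adds $\underline\theta$. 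You instead argue at the level of posteriors: any two pooling actions are payoff-equivalent for every type, so differing posteriors would constitute payoff-irrelevant signaling; equal posteriors force the Bayes-rule proportionality $\alpha_\theta(a''|h_t)=c\,\alpha_\theta(a'|h_t)$ uniformly in $\theta$, hence every higher type (which must put weight on \emph{some} pooling action, separating actions being closed to it by part 2) puts weight on \emph{every} pooling action. Both arguments lean on the no-payoff-irrelevant-signaling hypothesis in the same essential way; yours buys a slightly stronger conclusion in the multi-pooling-action case (all types mix over all pooling actions, rather than higher types concentrating on a single one), while the paper's is shorter because Proposition \ref{prop:attrn} has already done the work at the level of strategies. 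You also handle explicitly the case where no payoff-relevant signaling occurs at $h_t$, which the paper's proof glosses over when it invokes parts 1--3 of Theorem \ref{thm:main} (those are stated conditional on payoff-relevant signaling happening); your observation that ``no signaling of either kind'' forces identical posteriors across all on-path actions is the right way to fill that in.
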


It is worth noting that there may be histories $h_t$ at which $p(h_t)$ assigns arbitrarily small weight to the lowest type. So while this type can never be ruled out completely along the pooling path, asymptotically the receiver's belief may assign arbitrarily low weight to it. Notably, this implies that the mechanism of attrition of the lowest type can yield full separation asymptotically if there are only two types of the sender but not if there are more (but finitely many), which is an important takeaway, since many applied papers treat two-type models as proxies for more general settings. 

At the same itme, the assumption that is crucial to our analysis is that the lowest type is separated away from all other types. Otherwise -- e.g., with an interval type space -- full asymptotic revelation is possible again. An example of such outcome in the context of bargaining is presented by \citet{FS10}. Their equilibrium resembles the attrition equilibria of this paper, except in their equilibrium, an interval of lowest types separates away in every period instead of the single lowest type mixing between that and pooling. This observation serves to illustrate the nontrivial implications of model discretization: if one attempts to compute equilibria of a continuous-time signaling model with an interval of types by approximating it with a disrete-time finite-type model, different approximations may yield qualitatively different results. In particular, if an interval type space is approximated by a grid that is too coarse relative to time discretization, the researcher would conclude that no asymptotic learning takes place in the discretized model, whereas it could take place in the continuous model.

Theorem \ref{thm:main} and its corollaries effectively provide a cookbook on how to construct an equilibrium with payoff-relevant signaling only. Suppose we want signaling to occur during the time interval $[0,T]$. Then in every period, along the pooling path we shall have two actions available to the sender: a separating action $\underline{a} \in A^*\left( h_t, \mathbf{b}, L \right)$ only taken by the lowest type $L \equiv \min \Theta$ and a pooling action $\bar{a}$ that satisfies the condition in Corollary \ref{cor:repsig} -- the latter action will be played by $L$ with some probability and by all other types for sure. Note that we have a degree of freedom in this construction: reputation from taking a pooling action depends on the probability with which type $L$ separates in that given period. Hence by changing these probabilities we will be able to sustain different pooling actions $\bar{a}$ in equilibrium. To complete the construction, we need to verify that from time $T$ onwards, the pooling strategy is such that $L$ is exactly indifferent at $T$ (or the last period before $T$) between separating and following this pooling path, and to verify that all other types always weakly prefer the pooling action to the optimal deviation. Appendix B provides an example of an equilibrium constructed using this cookbook, in the context of the price signaling model developed in the following section.

\section{Applications} \label{sec:app}

This section presents examples of applied models in different settings that fit our framework. It is meant to demonstrate some instances of models yielding additively and/or multiplicatively separable payoff functions that allow (SC) to be verified with little effort. 
We begin by presenting in Section \ref{sub:suff} a specific separable framework and show that there exist simple sufficient conditions for (MON) and (SC) within this framework.
We then proceed to applying this framework, starting with a model for the example from the Introduction, in which a firm can use its historic prices to convince the antitrust authority that the firm has no monopoly power. Section \ref{sub:antitrust} sets up the model and verifies (both directly and using results from Section \ref{sub:suff}) that our results apply to it; Appendix B constructs an example of an informative equilibrium to verify that they exist and to show more concretely how they can look. Other applications, to labor market signaling and bargaining, are considered in Sections \ref{sub:jmsig} and \ref{sub:barg} respectively.

\subsection{Separable Settings} \label{sub:suff}

This section shows that if the agent's flow utility function $\tilde{u}(a,p,\theta)$ is separable in a specific way, then (MON) and (SC) can be easily verified. While this form of the utility function may appear restrictive, the remainder of Section \ref{sec:app} shows that it captures a wide range of settings, including classic signaling and bargaining models.

In particular, suppose the agent's flow utility function $\tilde{u}$ defined in \eqref{eq:utilde} can be represented as 
\begin{equation}
	\tilde{u}(a,p,\theta) = \phi_0(a,p) + \phi_1(a,p) \psi(\theta)
	\label{eq:u_repr}
\end{equation}
for some collection of functions $\phi_0, \phi_1, \psi$.
Then we can derive simple conditions on these three functions that are sufficient for (MON) and (SC) to hold. These conditions are given by the two respective propositions below.

\begin{proposition}
	\label{prop:suff_mon}
	If representation \eqref{eq:u_repr} applies, with $\psi(\theta) \geq 0$ and $\phi_0(a,p),\phi_1(a,p)$ weakly increasing in $p$, then $\tilde{u}$ satisfies (MON).
\end{proposition}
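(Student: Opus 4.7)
The plan is to verify (MON) by direct substitution into the separable representation \eqref{eq:u_repr} and to read off each piece of the required monotonicity from the hypotheses. Since $\Theta$ is finite, the characterization in footnote \ref{foot:fosd} lets me restrict attention to ``elementary'' FOSD improvements: pairs $p', p''$ that differ only in a single mass swap between two types $\theta' > \theta''$, with $p' >_{FOSD} p''$; any FOSD ranking between arbitrary beliefs can then be obtained by chaining finitely many such elementary swaps.

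Fix such a pair $(p', p'')$, an action $a$, and a type $\theta \in \Theta$. From \eqref{eq:u_repr},
\begin{equation*}
\tilde u(a,p',\theta) - \tilde u(a,p'',\theta) = \bigl[\phi_0(a,p') - \phi_0(a,p'')\bigr] + \bigl[\phi_1(a,p') - \phi_1(a,p'')\bigr]\, \psi(\theta).
\end{equation*}
The first bracket is nonnegative because $\phi_0(a,\cdot)$ is weakly increasing in $p$ in the FOSD order by hypothesis; the second bracket is nonnegative by the same assumption on $\phi_1(a,\cdot)$; and $\psi(\theta) \geq 0$ by hypothesis, so the product in the second term is also nonnegative. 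Summing gives $\tilde u(a,p',\theta) \geq \tilde u(a,p'',\theta)$, i.e., weak FOSD monotonicity. Chaining elementary swaps extends the inequality to arbitrary $p' \geq_{FOSD} p''$.

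For the strict portion of (MON) — that $\tilde u(a,p,\theta) > \tilde u(a,\delta_\theta,\theta)$ whenever $p >_{FOSD} \delta_\theta$, and symmetrically for $p <_{FOSD} \delta_\theta$ — the same decomposition applies, and strictness transfers as soon as either $\phi_0$ is strictly FOSD-increasing between $\delta_\theta$ and $p$, or $\phi_1$ is strictly FOSD-increasing there and $\psi(\theta) > 0$. I would therefore read ``weakly increasing'' in the hypothesis as ensuring the weak half of (MON), and treat the degenerate-belief strict comparison as implicitly requiring at least one of $\phi_0$ or $\phi_1\psi(\theta)$ to be \emph{strictly} increasing at the comparison point — a condition that, in the applications of Section \ref{sec:app}, will arise naturally from the receiver's strict responsiveness to beliefs through $\mathbf{b}$. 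The main obstacle is essentially nil: the argument is a one-line arithmetic check once the decomposition is written down. The only point requiring minor care is isolating where weak inequalities can be promoted to strict ones for the Dirac comparison.
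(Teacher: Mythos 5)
Your proof is correct and takes essentially the same route as the paper's: decompose $\tilde u(a,p',\theta)-\tilde u(a,p'',\theta)$ via \eqref{eq:u_repr} and observe that each bracket is nonnegative under the stated hypotheses. Your additional remark that the hypotheses only deliver the \emph{weak} half of (MON), and that the strict Dirac-comparison clause needs strict responsiveness of $\phi_0$ or $\phi_1\psi(\theta)$ somewhere, is a point the paper's own (purely weak-monotonicity) proof silently passes over, so flagging it is a genuine improvement rather than a detour.
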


As in the rest of the paper, monotonicity in $p$ is understood with respect to $\geq_{FOSD}$ order on $p$ (see footnote \ref{foot:fosd}).
Note that $\phi_1(a,p) \psi(\theta)$ can be negative for some or all $a,p,\theta$; without loss we let $\phi_1$ absorb the negative sign in this case.

\begin{proposition}
	\label{prop:suff_sc}
	If representation \eqref{eq:u_repr} applies and outcomes $x_t$ are uninformative at all $h_t$, then (SC) holds if $\psi(\theta)$ is strictly monotone in $\theta$.
\end{proposition}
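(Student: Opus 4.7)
The plan is to exploit the separable form \eqref{eq:u_repr} to write $U(\mathbf{a}|h_t,\mathbf{b},\theta)$ as an affine function of $\psi(\theta)$ whose slope and intercept depend only on $\mathbf{a}$ and $h_t$. Once this decomposition is in hand, for any two pure strategies $\mathbf{a}', \mathbf{a}''$ the difference is $\mathcal{U}(\theta) = A + B\,\psi(\theta)$ for constants $A, B$. If $B = 0$, $\mathcal{U}$ is constant and is therefore either identically zero or never zero; if $B \neq 0$, strict monotonicity of $\psi(\theta)$ in $\theta$ implies that $\mathcal{U}$ is strictly monotone in $\theta$ and thus crosses zero at most once. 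Either way, (SC) holds.

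The crucial preliminary step is to show that when the agent of type $\theta$ follows a pure strategy $\mathbf{a}$ from $h_t$ onward, the joint distribution of the realized path $\{(a_s, p_s)\}_{s \geq t}$ of actions and receiver's posterior beliefs is independent of $\theta$. I would argue this inductively on $s$. First, $\mathbf{a}$ pins down the action $a_s = \mathbf{a}(h_s)$ as a function of the history alone. Second, the receiver's belief $p(h_s, a_s, x_s)$ is determined either by Bayes' rule applied to the equilibrium strategies $\{\alpha_\theta\}$ (on path) or by (NDOC-P) (off path), in neither case depending on the agent's latent type. Third, since outcomes $x_t$ are uninformative, the conditional distribution of $x_s$ given $(h_s, a_s)$ is independent of $\theta$, so the distribution of the next history $h_{s+dt}$ is also $\theta$-free. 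Composing these observations period by period gives the claim.

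Combining the two ingredients is then routine. Substituting \eqref{eq:u_repr} into the definition of $U$ and using linearity of expectation gives
\begin{equation*}
    U(\mathbf{a}|h_t,\mathbf{b},\theta) = \Phi_0(\mathbf{a},h_t) + \psi(\theta)\,\Phi_1(\mathbf{a},h_t),
    \quad \text{where } \Phi_i(\mathbf{a},h_t) \equiv \mathbb{E}\!\left[\textstyle\sum_{s \geq t} e^{-r(s-t)} \phi_i(a_s, p_s)\,dt \right],
\end{equation*}
with both $\Phi_0$ and $\Phi_1$ well-defined and independent of $\theta$ by the preceding step. Taking the difference across $\mathbf{a}', \mathbf{a}''$ yields $\mathcal{U}(\theta) = A + B\,\psi(\theta)$ with $A \equiv \Phi_0(\mathbf{a}'',h_t) - \Phi_0(\mathbf{a}',h_t)$ and $B \equiv \Phi_1(\mathbf{a}'',h_t) - \Phi_1(\mathbf{a}',h_t)$, completing the argument.

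The only nontrivial point is the bookkeeping in the inductive step: one must verify that the belief process $p_s$ is genuinely a function of the observable history and not of the agent's true type, even along paths that are off the equilibrium path for some types. This is exactly what (NDOC-P) delivers, since it fixes the off-path posterior to $\delta_{\min S(\cdot)}$, a rule that does not reference $\theta$. Beyond that, the argument is a direct application of linearity of expectation to the separable representation.
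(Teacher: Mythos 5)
Your proof is correct and follows essentially the same route as the paper's: both substitute the separable representation into $U$, use the fact that uninformative outcomes make the path of actions and beliefs type-independent under a fixed pure strategy, and conclude that $\mathcal{U}(\theta)$ is affine in $\psi(\theta)$ and hence strictly monotone or constant. If anything you are slightly more careful than the paper, which drops the $\phi_0$-difference term in its final display; your explicit handling of the constant $A$ (a nonzero constant still crosses zero at most once) closes that small gap.
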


We now continue to the more specific applications that use these results.

\subsection{Price Signaling} \label{sub:antitrust}

In this section we revisit the example from the Introduction, in which Apple is using its past pricing decisions to argue that it faces competitive pressure with regards to its App Store. This section constructs a simple model for this story that fits the framework of Section \ref{sec:model}, thereby demonstrating that past prices can not serve as conclusive proof of a lack of monopoly power. We then construct an informative equilibrium in Appendix B for the special case of this model, showing that prices \emph{can} serve as \emph{suggestive} evidence.

To construct the simplest model possible, let us adopt a framework in the spirit of monopolistic competition. Consider a firm (Apple) that serves app developers, and for simplicity ignore the downstream market, in which developers interact with app users. The firm faces residual demand curve $q_t=(1-\theta a_t) dt$ in every period $t \in \mathcal{T} \equiv \{0, dt, 2dt, ... \}$, where $a_t \in \mathbb{R}_{+}$ is the price (App Store commission) the firm sets in that period, and $\theta \in \Theta \subseteq \mathbb{R}_{++}$ is the degree of the competitive pressure that the firm faces, hereinafter referred to as the state. In this specification, if Apple App Store is, in fact, competing with Google Play Store and other app and game stores on other devices, this is reflected by higher $\theta$ and lower residual demand for Apple's services.

In every period, after the firm sets price $a_t$, the developers may file a complaint to the antitrust authority (regulator). This happens with probability $\lambda(a_t) dt$, which is weakly increasing in $a_t$. If a complaint is filed, the regulator opens an investigation, which results in a fine of size $F$ if evidence indicates that the competitive pressure $\theta$ is low. In particular, assume that if an investigation is launched, a fine is imposed with probability $\gamma(\mathbb{E}[\theta|h_t])$ that is strictly decreasing in the expectation $\mathbb{E}[\theta|h_t]$ of the state inferred from some prior belief $p_0 \in \varDelta(\Theta)$ and the history of the firm's past pricing choices $h_t = (a_0, ..., a_{t-dt})$.

To verify that Theorems \ref{thm:2types} and \ref{thm:main} apply to this model, we need to verify that assumptions (MON) and, in case of Theorem \ref{thm:main}, (SC) hold. 
The flow payoff function can be written down as
\begin{align} \label{eq:antitrust_flow}
	\tilde{u}(a,p,\theta) = a (1 - \theta a) - \lambda(a) \gamma(p) F,
\end{align}
where $\gamma(p_t) = \gamma(p(h_t)) \equiv \gamma(\mathbb{E}[\theta|h_t])$. 
This utility function fits representation \eqref{eq:u_repr} with $\phi_0(a,p) = a - \lambda(a) \gamma(p) F$, $\phi_1(a,p) = -a^2$, and $\psi(\theta) = \theta$. One can see that $\psi(\theta) > 0$ is strictly increasing, and $\phi_0(a,p), \phi_1(a,p)$ are weakly increasing in $p$, since $\phi_0(a,p)$ is strictly decreasing in $\gamma$, which is strictly decreasing in $\mathbb{E}[\theta|p]$, which is strictly increasing in $p$ w.r.t. FOSD shifts. Hence Propositions \ref{prop:suff_mon} and \ref{prop:suff_sc} apply, and (MON) and (SC) hold in this model.

Both assumptions can be verified directly. To verify (MON), note that \eqref{eq:antitrust_flow} only depends on $p$ through $\gamma(p)$, and as argued above, $\gamma$ is strictly increasing in $p$ w.r.t. FOSD shifts, hence (MON) holds.
To verify (SC), fix some $\mathbf{a}', \mathbf{a}''$; then the function $\mathcal{U}(\theta) \equiv U (\mathbf{a}'' | h_t, \theta) - U (\mathbf{a}' | h_t, \theta)$ for some fixed $h_t$ is given by:
\begin{align*}
	\mathcal{U}(\theta) &= \sum_{s \in \mathcal{T}, s \geq t} e^{-r(s-t)} \bigg( \mathbb{E} \left[ \mathbf{a}''_s \left(1- \theta \mathbf{a}''_s \right) - \lambda(\mathbf{a}''_s) \gamma(p_s) F \mid h_t, \mathbf{a}'' \right] -
	\\
	& \phantom{= \sum_{s \in \mathcal{T}, s \geq t} e^{-r(s-t)} \bigg(}
	- \mathbb{E} \left[ \mathbf{a}'_s \left(1- \theta \mathbf{a}'_s \right) - \lambda(\mathbf{a}'_s) \gamma(p_s) F \mid h_t, \mathbf{a}' \right] \bigg)
	\\
	&= C_1 + C_2 \theta
\end{align*}
for some constants $C_1, C_2$ that depend on $\mathbf{a}',\mathbf{a}'',h_t$. We see that $\mathcal{U}(\theta)$ is a linear function of $\theta$, hence (SC) holds.

The results of Theorems \ref{thm:2types} and \ref{thm:main} therefore apply: the firm can \emph{not} prove conclusively, by referring to prices alone, that its market power is sufficiently low (i.e., that $\theta$ is below some threshold $\bar{\theta}$ set by the regulator). However, prices \emph{can} serve as suggestive evidence. To demonstrate this, an informative equilibrium is constructed in Appendix B given specific functional forms for $\lambda$ and $\gamma$.

It should be self-evident that the goal of this section is to produce the simplest model for the setting. As a result, the model reduces the effects of competition to a residual demand curve, reduces consumers to non-strategic complainers, assumes the antitrust action is just a fixed fine, that investigations do not condition on past investigations, etc. A paper aiming to explore this particular phenomenon could set up a more convincing model that avoids the aforementioned simplifications.

\subsection{Labor Market Signaling} \label{sub:jmsig}

In this section we revisit the classic labor market signaling model (\citet{Spe73}), which sparked the original discussion around dynamic signaling (\citet{NVD90}, \citet{Swi99}). In the dynamic version of this model, a long-lived candidate of privately known ability $\theta \in \Theta \subseteq \mathbb{R}_+$ acquires costly and, w.l.o.g., unproductive education in an attempt to signal her ability to potential employers. A high-ability worker is more productive on the job and can thus bargain for a higher wage, while also having lower cost of education than a low-ability worker.
In every period $t \in \mathcal{T} \equiv \{0, dt, 2dt, ... \}$ she chooses education intensity $e \in E \subset \mathbb{R}$. The flow cost of education is given by $c(e|\theta) \equiv l(e) \cdot m(\theta)$, where $l(e)$ is increasing in $e$ with $l(0)=0$, and $m(\theta)$ is strictly decreasing in $\theta$. 

There is a population of homogeneous competitive employers, who observe the full history of the candidate's education choices and grades. In every period they simultaneously offer employment contracts to the candidate.\footnote{Suppose that the offers are made privately and so are not observed by other firms. \citet{NVD90} and \citet{Swi99} show that otherwise -- if the offers are public, -- a tacit collusion equilibrium with perfect separation can be sustained, see footnote \ref{foot:NVD_Swi} for details.} 
After observing all contracts, the candidate may accept at most one of them. If a contract is accepted, in every future period the candidate receives wage $w\cdot dt$, where $w$ is as specified in the contract. Let $d \in \{0,1\}$ denote the worker's acceptance decision -- whether she chooses to accept an offer in a given period or not. If the candidate chooses to accept, she would trivially find it optimal to choose the highest-wage contract.

W.l.o.g., let $\theta$ be equal to the candidate's on-the-job productivity (so her output is $\theta \cdot dt$ per period). This means that at any history $h_t$, all competitive firms will offer the same wage $w(h_t) = \mathbb{E} [\theta | h_t, d(h_t)=1]$. A history here consists of the candidate's past actions: $h_t = \{d_s,e_s\}_{s\in \mathcal{T}, s<t}$. The implied timing in the stage game at any history $h_t$ is:
\begin{enumerate}
	\item if the candidate has accepted an offer with wage $w$ in the past, she receives the contracted wage $w\cdot dt$, and the game proceeds to the next period. Otherwise,
	\item firms make wage offers $w(h_t)$ to the candidate;
	\item the candidate decides $d(h_t)$ whether to accept the highest-wage contract. If $d(h_t)=1$ then the game continues to the next period. Otherwise,
	\item the candidate chooses $e(h_t)$, her education effort in the current period, and the game continues to the next period.
\end{enumerate}
Once a candidate has accepted an offer, at all future histories set $d(h_t)=e(h_t)=0$.

In such a game, the candidate's payoff from following some given strategy $\mathbf{a} = \{d(h),e(h)\}_{h\in \mathcal{H}}$ conditional on some history $h_t \in \mathcal{H}$ at which she has not yet accepted an offer is given by
\begin{align*}
	U (\mathbf{a} | h_t, \theta) \equiv& \sum_{s \in \mathcal{T}, s \geq t} e^{-r(s-t)} \left[ d(h_s) \frac{w(h_s)}{r} - \left( 1 - d(h_s) \right) \cdot c\left(e(h_s)|\theta \right) dt \right] dt,
\end{align*}
since accepting an offer at $h_t$ is equivalent to receiving a lumpsum payoff of $w(h_t)/r$.
The flow utility function can thus be framed as \eqref{eq:u_repr} using functions $\phi_0((d,e),p) = d \frac{w(p,d)}{r}$, $\phi_1((d,e),p) = -(1-d)l(e)$, and $\psi(\theta) = m(\theta)$. Here $m(\theta)$ is strictly monotone, $\phi_1((d,e),p)$ is independent of $p$, and $\phi_0((d,e),p)$ is weakly increasing in $p$, since $w(p,d) = \mathbb{E}[\theta|p,d=1] = \mathbb{E} \big[ \mathbb{E}[\theta | p] \mid d=1 \big]$ by the law of iterated expectations, and $\mathbb{E}[\theta | p]$ is strictly increasing in $p$. Therefore, by Propositions \ref{prop:suff_mon} and \ref{prop:suff_sc}, (MON) and (SC) hold.

Verifying (MON) directly is not much different from invoking Proposition \ref{prop:suff_mon} above, same as in Section \ref{sub:antitrust}. It is easy to verify (SC) directly as well: for any pair of strategies $\mathbf{a}',\mathbf{a}''$, the function $\mathcal{U}(\theta) \equiv U (\mathbf{a}'' | h_t, \theta) - U (\mathbf{a}' | h_t, \theta)$ can be written as 
\begin{align*}
	\mathcal{U}(\theta) =& \sum_{s \in \mathcal{T}, s \geq t} e^{-r(s-t)} \bigg[ \left(d'(h_s)-d''(h_s)\right) \frac{w(h_s)}{r} -
	\\
	& - \big[ \left( 1 - d'(h_s) \right) \cdot l\left(e'(h_s)\right) - \left( 1 - d''(h_s) \right) \cdot l\left(e''(h_s)\right) \big] m(\theta) dt \bigg],
	\\
	=& C_1 + C_2 m(\theta)
\end{align*}
for some $C_1,C_2$ that depend on $\mathbf{a}', \mathbf{a}'', h_t$.
Since $m(\theta)$ is strictly decreasing in $\theta$, $\mathcal{U}(\theta)$ is either strictly monotone, or constant, depending on whether the coefficient at $m(\theta)$ is positive, negative, or zero. Since strategies $\mathbf{a}',\mathbf{a}''$ and history $h_t$ were arbitrary, this means that payoff function $U(\mathbf{a}|h, \theta)$ satisfies (SC) and our results apply. The candidate will only be able to signal her ability via attrition: in an informative equilibrium -- if it exists -- the lowest type would drop out of education and take up a job at a random date, while all other types go through the whole education process prescribed by the equilibrium and only accept the job afterwards. The latter path would also be taken by some low-ability candidates. 
\citet{Swi99}, however, shows that pooling is the only equilibrium in this model, i.e., even the attrition-form informative equilibria do not exist.

\subsection{Bargaining} \label{sub:barg}

In this section we consider a simple model of bilateral bargaining, in which one party's valuation is commonly known, while another party's valuation is their private information. 
Consider two players, a buyer $B$ and a seller $S$, who interact repeatedly in every period $t \in \mathcal{T} \equiv \{0, dt, 2dt, ... \}$. There is a unit of indivisible good of some quality $\theta \in \Theta \subseteq \mathbb{R}_+$ privately known by the seller. The buyer has some commonly known prior belief $p_0 \in \varDelta(\Theta)$ about quality $\theta$, but does not observe the realization of $\theta$. The seller is initially in possession of the item and values it at $c(\theta)$, assumed to be either constant, or strictly increasing. The buyer's valuation is $v(\theta)$. In every period one of the players $x_t \in \{B,S\}$ is chosen as the proposer (the choice rule can be random or deterministic, and/or history-dependent) and can offer a price $y_t \in \mathbb{R}_+$. The other player then decides $z_t \in \{0,1\}$ whether to accept the offer. If the offer is accepted, the item is traded at that price and the game ends (which can be emulated by setting $y=z=0$ for both players at all subsequent histories). Otherwise the game continues to the next period. Both players discount the future at rate $r$.

History $h_t  = \{x_s,y_s,z_s\}_{s\in \mathcal{T}, s<t}$ in this game is given by the players' past offers and responses, as well as identities of the proposing player. The players' pure strategies are given by $\mathbf{a} = \{y_S(h),z_S(y,h)\}_{h\in \mathcal{H}, y\in \mathbb{R}_+}$ for the seller (conditional on type $\theta$) and $\mathbf{b} = \{y_B(h),z_B(y,h)\}_{h\in \mathcal{H}, y\in \mathbb{R}_+}$ for the buyer respectively, where $y_i(h)$ is the player's proposal if they are selected, and $z_i(y,h)$ is their response to the opponent's proposal $y$.\footnote{Note that regardless of $x_t$, the seller in this example always acts as ``the sender'' and the buyer as ``the receiver''. In particular, the setup of Section \ref{sec:model} implies that in this setting, the seller chooses the time-$t$ action $(y_S(h_t),z_S(y,h_t))$ before knowing time-$t$ role allocation $x_t$, and the buyer chooses $(y_B(h_t),z_B(y_S(h_t),h_t))$ after learning both $x_t$ and the seller's action. However, this deviation from the standard offer-response procedure is not meaningfully impactful when the buyer is restricted to Markov strategies.} 
Since the buyer is neither short-lived, nor myopic, this setting does not fall within the scope of the model defined in Section \ref{sec:model}. However, as mentioned in footnote \ref{foot:myopy}, those assumptions are only needed to shut down the folk theorem effects, so that (MON) is a non-vacuous assumption. In the context of this bargaining model, this can be achieved by assuming instead that the buyer follows a \emph{Markov} strategy $\{y_B(p), z_B(y_S,p)\}$, which treats belief $p_t$ as a sufficient statistic of the whole history $h_t$, and that both $y_B(p)$ and $z_B(y_S,p)$ are increasing in $p$ (where monotonicity is interpreted in the same semi-strong sense w.r.t. the F.O.S.D. order on $p$ as in the (MON) assumption).

Then denoting the probability with which the buyer is selected to propose at $h_t$ as $\chi(h_t) \equiv \mathbb{P} \left(x(h_t)=B \mid h_t \right)$, the seller's expected payoff from following some strategy $\mathbf{a}$ conditional on the buyer's strategy $\mathbf{b}$ and some history $h_t \in \mathcal{H}$, by which no offer had been accepted, is given by
\begin{equation}
	\begin{aligned} 
	U(\mathbf{a}|h_t,\theta,\mathbf{b}) \equiv \mathbb{E} \Bigg[ \sum_{s \in \mathcal{T}, s \geq t} e^{-r(s-t)} &\big[ \chi(h_s) z_S(y_B(p_s),h_s) \left( y_B(p_s)- c(\theta) \right) + 
	\\
	& + \left(1-\chi(h_s)\right) z_B(y_S(h_s),p_s) \left( y_S(h_s) - c(\theta) \right) \big] \Bigg].
	\end{aligned}
	\label{eq:U_barg}
\end{equation}
It is easy to see that $U(\mathbf{a}|h_t, \theta, \mathbf{b})$ is linear in $c(\theta)$, hence so is $\mathcal{U}(\theta) \equiv U(\mathbf{a}''|h_t,\theta,\mathbf{b}) - U(\mathbf{a}'|h_t, \theta, \mathbf{b})$ for any pair of seller's strategies $\mathbf{a}',\mathbf{a}''$ and any given buyer's strategy $\mathbf{b}$. Therefore, (SC) holds in this model as long as $c(\theta)$ is constant or strictly increasing, regardless of $v(\theta)$. 
Since both $y_B(p)$ and $z_B(y_S,p)$ are assumed to be increasing in $p$, (MON) holds as well.\footnote{Representation \eqref{eq:utilde} assumes flow utility only depends on $x_t$ via $p(h_t,a_t,x_t)$, whereas the flow utility in \eqref{eq:U_barg} explicitly depends on $x_t$ -- or, after taking expectations, on its distribution $\chi(h_t)$. However, in common settings of buyer-proposing ($\chi(h_t) \equiv 1$), seller-proposing ($\chi(h_t) \equiv 0$), and even alternating-offer ($\chi_0 \in \{0,1\}$, $\chi(h_{t+dt}) = 1 - \chi(h_{t})$) bargaining, we can treat $\chi(h_t)$ as an exogenous parameter.}

Alternatively, we could once again invoke Propositions \ref{prop:suff_mon} and \ref{prop:suff_sc}, since \eqref{eq:U_barg} can, given the buyer's equilibrium strategy $\{y_B(p), z_B(y_S,p)\}$, be represented in terms of \eqref{eq:u_repr} with $\phi_0((y_S,z_S),p | \chi) = \chi z_S y_B(p) + (1-\chi) z_B(y_B,p) y_S$ and $\phi_1((y_S,z_S),p | \chi) = -\chi z_S - (1-\chi) z_B(y_B,p)$ being monotone in $p$ and $\psi(\theta) = c(\theta)$ being strictly monotone.\footnote{If $c(\theta)$ is constant, Proposition \ref{prop:suff_sc} can be applied by setting $\tilde{u}(a,p,\theta) = \phi_0(a,p)$.}

The main implication of Theorems \ref{thm:2types} and \ref{thm:main} for this bargaining model is that the perfectly efficient allocation is unattainable for any finite delay. However, our results make no statements on whether the Coase conjecture is realized and all seller types trade at $t=0$ at the lowest acceptable price, or delay can be used as an (imperfect but effective) screening device.\footnote{There is a subset of literature exploring possible reasons for delay in bargaining (e.g., \citet{AG00} and \citet{FS05}). Our focus is different: instead of demonstrating sufficient conditions within the bargaining models under which \emph{delay is the only equilibrium}, this paper considers a much more general class of models and aims to provide weaker sufficient conditions under which \emph{attrition is the only informative equilibrium}. In particular, when applied to bargaining, our model does not rule out the possibility of instant agreement, in which all types of the sender pool on the same action.}
The exact shape that the equilibria can take depends on $c(\theta)$, $v(\theta)$, and $x(h_t)$. For example, if $c(\theta)=c$ for some constant $c$ and $v(\theta) \geq c + g$ for some ``gap'' value $g$ and the uninformed buyer makes all the offers -- then the Coase conjecture realizes, and in the unique equilibrium all types of the seller pool on accepting the lowest price (\citet*{GSW86}). In the alternating-offer scenario with the same assumptions, on the other hand, a delay equilibrium is possible, where attrition occurs for some finite time. During that time, some of the lowest-type sellers sell, and after that all the remaining sellers pool on the same price (\citet{AD98}, as presented in \citet*{ACD02}). If $v(\theta)=v \geq c(\theta)$ for some constant $v$ and all $\theta$, then in the seller-proposing game pooling is, again, the only equilibrium, but not in the Coase conjecture sense. In that equilibrium all types of the seller propose $v$ in every period and obtain surplus $v-c(\theta)$, with their offer being accepted straight away (\citet{AD89b}). For a review of these and other results on bargaining under incomplete information, see \citet{ACD02}.
As applied to more recent literature, our framework also subsumes the model of \citet{DG20}, who explore a setting in which public news arrive during the bargaining process.

\section{Conclusion} \label{sec:concl}

This paper explores a model of dynamic signaling with observable actions. In this model a single privately-informed agent takes an action every period, but cannot commit to future actions. The receiver tries to infer the agent's information from his actions, and the receiver's opinion is relevant to the agent's payoff. 
The existing literature has implied that signaling is impossible in such setting, unless strong assumptions about off-equilibrium-path beliefs are adopted. This paper confirms the negative result that \emph{perfect separation} is impossible in such a setting. However, it provides a novel positive result, showing that \emph{imperfect signaling} is possible under reasonable off-path beliefs. Further, we show that such signaling must necessarily happen through attrition of the lowest type of the agent. In this attrition scenario, all types pool on the same action (or split across a number of different yet payoff-equivalent actions), while the lowest type also plays some separating action with positive intensity. 

The paper identifies sufficient conditions, under which the results hold. These include a restriction on monotonicity of the agent's preferences w.r.t. his reputation, and a restriction on the off-path beliefs to be reasonable. In the case of many types single-crossing of agent's preferences must also hold. The latter is, arguably, the strongest of the three assumptions and the one that would be most difficult to verify in the applied work. However, the paper presents a number of applied signaling models to demonstrate that this notion of single-crossing can be used in applied work. Future work could involve the exploration of simpler notions of single-crossing that would work for dynamic signaling games.

Importantly, the paper assumes the receiver to be myopic. The main issue that arises when both the agent and the receiver are strategic is the folk theorem, which says that any individually rational payoff for either player can be sustained in equilibrium for $dt$ small enough. The consequence of this equilibrium multiplicity is that (MON) and (SC) almost never hold across the whole spectrum of equilibria in a given setting. The solution, if one wishes to explore settings with a strategic long-lived receiver, is to focus on some selected equilibria -- i.e., to restrict attention to some fixed strategy (or a class of strategies) $\mathbf{b}$ of the receiver and to test (MON) and, if necessary, (SC) against those strategies. This approach has been demonstrated in this paper in the bargaining application. Exploration of the specific equilibrium selection criteria that could yield favorable results lies beyond the scope of this paper, but could be another prospective direction for future research.

\bibliographystyle{abbrvnat}
\bibliography{srds_lit}

\renewcommand{\baselinestretch}{1.33}\small
\renewcommand{\thesection}{Appendix \Alph{section}.}
\renewcommand{\thesubsection}{\Alph{section}.\the\value{subsection}}
\setcounter{section}{0}
\setcounter{equation}{0}
\setcounter{figure}{0}
\renewcommand*{\theHsection}{app.\the\value{section}}
\renewcommand{\theequation}{\Alph{section}.\arabic{equation}}
\renewcommand{\thefigure}{\Alph{section}.\arabic{figure}}

\section{Proofs and Supplementary Results} \label{sec:proofs}

\subsection{Proofs: Preliminaries}

The first observation states that once there is no need for signaling any more -- i.e., when the receiver's belief assigns probability $1$ to some type of the agent -- there are no reasons for the agent to steer away from the myopically optimal action.

\begin{lemma} \label{lem:obs1}
	For any $dt$, in any equilibrium that satisfies (NDOC),
	at any $h_t \in \mathcal{H}$,
	if $|S(h_t)| = 1$ then for all $\theta$ and all $h_s \succeq h_t$: $\alpha_\theta\left( A^*(h_s, \mathbf{b}, \theta ) \mid h_s \right) = 1.$
\end{lemma}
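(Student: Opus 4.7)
The plan is to show that once the receiver's belief has collapsed to a point mass $\delta_{\theta^*}$, the agent's environment is frozen: he cannot shift beliefs any further, so the only reason to deviate from a per-period best response would be through some non-signaling dynamic linkage, and none is present once the receiver's response rule is pinned down.

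The first step is to propagate the degenerate belief using (NDOC). Let $\{\theta^*\}=S(h_t)$. For every $h_s\succeq h_t$ (on or off path), (NDOC) yields $S(h_s)\subseteq S(h_t)=\{\theta^*\}$, and since $p(h_s)$ is a probability distribution with nonempty support, $S(h_s)=\{\theta^*\}$ and $p(h_s)=\delta_{\theta^*}$. By the no-sunspots assumption, the receiver's optimal response $\mathbf{b}^*(h_s,a,x)$ at any such $h_s$ depends on $h_s$ only through the (fixed) belief, so it reduces to a function $b^*(a,x)$, and the induced flow payoff $\tilde u(a,\delta_{\theta^*},\theta)=\mathbb{E}_x u(a,b^*(a,x),\theta)$ defined by \eqref{eq:utilde} is the same at every continuation history.

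The second step is a one-shot deviation argument to rule out non-myopic play. Suppose for contradiction that $\alpha_\theta(a'\mid h_s)>0$ for some $h_s\succeq h_t$, some $\theta\in\Theta$, and some $a'\notin A^*(h_s,\mathbf{b},\theta)$. Pick any $a''\in A^*(h_s,\mathbf{b},\theta)$, so that $\tilde u(a'',\delta_{\theta^*},\theta)>\tilde u(a',\delta_{\theta^*},\theta)$. Shifting mass from $a'$ to $a''$ at $h_s$ strictly raises the period-$s$ expected flow payoff by a positive $O(dt)$ quantity, while the post-$h_s$ environment -- the receiver's future response rule -- is unchanged, so the agent can replicate his original continuation policy after the deviation and secure the same continuation value. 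The deviation is therefore strictly profitable, contradicting the optimality of $\alpha_\theta$. Hence $\alpha_\theta(A^*(h_s,\mathbf{b},\theta)\mid h_s)=1$ at every $h_s\succeq h_t$ and every $\theta$.

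The argument applies uniformly to every $\theta\in\Theta$ (including types outside $S(h_t)$) and to every $h_s\succeq h_t$ (including off-path ones), because (NDOC) constrains beliefs at every continuation history irrespective of whether it is reached in equilibrium, and the one-shot deviation argument makes no use of where $\theta$ sits in the support. The subtlety to be careful about is the ``replicate the continuation'' claim in the deviation step: it is legitimate because the receiver's response is keyed only to the frozen belief and to the current $(a,x)$ pair, so no non-signaling dynamic coupling through the evolution of beliefs can make the per-period deviation suboptimal.
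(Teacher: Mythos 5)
Your proof is correct and follows essentially the same route as the paper's: propagate the degenerate belief to all continuation histories via (NDOC), use the no-sunspots property to conclude the receiver's response is pinned down by the frozen belief, and then observe that optimality reduces to pointwise (per-period) maximization of the flow payoff. You merely spell out as an explicit one-shot deviation argument what the paper compresses into the phrase ``the solution is given by pointwise maximization of the flow utility.''
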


\begin{proof}
	By (NDOC), for all $h_s \succeq h_t$: $p(h_s) = \delta_{S (h_t)}$. In particular, $p(h_s)$ is independent of all actions and outcomes during $[t,s)$, hence the receiver's best response $\mathbf{b} (h_s,a_s,x_s)$ is the same at all such $h_s$ (given $a_s$ and $x_s$). Therefore, the solution to \eqref{eq:util} is given by pointwise maximization of the flow utility.
\end{proof}

Lemma \ref{lem:obs1} above is the direct consequence of (NDOC): actions cannot change a degenerate belief under this assumption, hence the myopic optimum is chosen. This captures the main tension between signaling and sequential rationality: signaling requires sticking to the costly action over an extended period of time, while sequential rationality as captured by Lemma \ref{lem:obs1} pushes against that when no further signaling concerns are present.
The remaining statements formalize this intuition. However, before proceeding any further, we use Lemma \ref{lem:obs1} to prove Lemma \ref{lem:cond} from the text.

\begin{proof}[Proof of Lemma \ref{lem:cond}.]
	Denote the original equilibrium as $(\alpha_1,\mathbf{b}_1,p_1)$. Construct the new equilibrium $(\alpha_2,\mathbf{b}_2,p_2)$ by copying the strategies and beliefs the original equilibrium prescribes for all on-path histories $h_t$. For all off-path histories $h_t$, set $p_2(h_t) = \delta_{\underline{S}(h)}$, where $h$ is the last on-path history preceding $h_t$. For the receiver's strategy $\mathbf{b}_2(h_t,\alpha_2(h_t),x_t)$ at off-path $h_t$, take any optimal strategy that is consistent with on-path play w.r.t. the ``no sunspots'' assumption. Finally, the agent's strategy profile $\alpha_2(h_t)$ for off-path histories $h_t$ is set in conformance with Lemma \ref{lem:obs1}.
	
	Belief profile $p_2$ will then satisfy (NDOC-P) and be consistent with the strategy profile $\alpha_2$. The receiver's strategy $\mathbf{b}_2$ is, by construction, optimal at all histories. The agent's strategies $\alpha_{2,\theta}$ will be optimal at off-path histories by Lemma \ref{lem:obs1}. Optimality of $\alpha_{2,\theta}$ for type $\theta$ at any on-path history $h_t$ can be verified by observing that value $V(a | h_t, \mathbf{b}_2, \theta)$ is the same as in the original equilibrium for all on-path actions $a \in A(h_t)$ and weakly smaller for off-path actions $a \in A \backslash A(h_t)$ (since any such action generates a pointwise lower path of future reputation and (MON) holds). I.e., the choice between any pair of on-path actions is unaffected by the off-path modifications, while deviations to off-path actions are less appealing in the new equilibrium. We conclude that $(\alpha_2,\mathbf{b}_2,p_2)$ is an equilibrium.
\end{proof}

\subsection{Proofs: Two Types}

\begin{proof}[Proof of Theorem \ref{thm:2types}.]
	\textbf{Statement 1.} Suppose first, by way of contradiction, that there exist $h_t \in \mathcal{H}$ and $a' \in A$ such that $\alpha_H(a'|h_t) > 0$ but $\alpha_L(a'|h_t) = 0$. Then $p(h_t, a',x_t) = \delta_H$ for any $x_t \in X$ by (NDOC-P). By playing $a'$ at $h_t$ the low type receives the highest possible continuation utility after $t$ (since by Lemma \ref{lem:obs1} he can play the myopically optimal action thereafter), while by following the equilibrium path he receives strictly less due to (MON). The utility is bounded, hence for $dt$ small enough deviating to $a'$ at $h_t$ is optimal for $L$ -- a contradiction.\footnote{Note that by belief consistency and rationality there must exist $a'' \in A$ s.t. $\alpha_L(a''|h_t) > 0$ and $p(h_t,a'',x_t)(H) < p(h_t,a',x_t)(H)$ for all $x_t$, hence $V((h_t,a',x_t),\mathbf{b},L)$ is bounded away from $V((h_t,a'',x_t),\mathbf{b},L)$ for all $h_t, x_t, \mathbf{b}$. Therefore, there exists $\bar{dt}(h_t, \mathbf{b})$ s.t. if $dt < \bar{dt}(h_t, \mathbf{b})$ then period-$t$ gains for $L$ from playing $a''$ compared to $a'$ cannot outweigh the losses from $t+dt$ onwards.}
	
	Payoff-equivalence is shown as follows: for any two $a',a'' \in A$ such that $\alpha_H(a'|h_t) > 0$ and $\alpha_H(a''|h_t) > 0$ it must be that $V (a' | h_t, \mathbf{b}, H) = V (a'' | h_t, \mathbf{b}, H)$, otherwise the high type would only play one of the actions and not the other. The first part of the argument showed that $\alpha_L(a'|h_t) > 0$ and $\alpha_L(a''|h_t) > 0$, hence $V (a' | h_t, \mathbf{b}, L) = V (a'' | h_t, \mathbf{b}, L)$ by the same logic.
	
	\textbf{Statement 2.} Begin with the first part (that $a' \in A^*\left( h_t, \mathbf{b}, L \right)$). For any such $a'$ that $\alpha_H(a'|h_t) = 0$ and $\alpha_L(a'|h_t) > 0$ and any outcome $x_t$, we have $p(h_{t+dt}) = \delta_L$, where $h_{t+dt} = (h_t, a',x_t)$. By Lemma \ref{lem:obs1}, at all $h_s \succ h_t$, only bliss actions are played: $a_s \in A^*( h_t, \mathbf{b}, L)$. If $a' \notin A^*( h_t, \mathbf{b}, L)$ then playing a bliss action $a'' \in A^*( h_t, \mathbf{b}, L)$ at $h_t$ instead -- and continuing with $a_s$ at all subsequent histories -- yields a strictly higher flow payoff at $h_t$ and the same continuation payoff. Hence playing $a'$ at $h_t$ was not optimal.
	
	The second part of the second statement follows from the same argument as did payoff equivalence for $L$ in the first statement.
\end{proof}

\subsection{Proofs: Finite Types}

Before proceeding to the proof of Theorem \ref{thm:main}, it is convenient to split parts of it off into supplementary lemmas.
We begin by arguing in Lemma \ref{lem:supp1} that at no history can actions lead to separation of types into disjoint sets that can be compared by a strong set order -- unless one of these sets is a singleton coinciding with the lower bound of the other set. In particular, we show that sets of types in the support of two different actions have to necessarily overlap (not in the sense of having common elements, but in the sense of upper and lower bounds).

\begin{lemma} \label{lem:supp1}
	Suppose (MON) holds and $dt \to 0$. Fix any equilibrium and any history $h_t \in \mathcal{H}$.
	Then for any $a',a'' \in A(h_t)$ we have $\bar{S}(h_t, a') \geq \underline{S}(h_t, a'')$, with equality only if $S(h_t, a')$ is a singleton.\footnote{This Lemma and the remainder of the Appendix uses $S(h_t, a)$ to denote ``$S(h_t, a,x_t)$ for all $x_t \in X$ in the support''. This object is well defined in equilibrium for on-path histories and actions because the support of $x_t$ is type-independent and equilibrium beliefs must be consistent. We are adopting the simplifying assumption that the same holds off the equilibrium path, but this is not necessary for the arguments to go through as long as (NDOC-P) holds.}
\end{lemma}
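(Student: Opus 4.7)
The plan is a proof by contradiction. Suppose the conclusion fails: either $\bar{S}(h_t, a') < \underline{S}(h_t, a'')$, or equality holds while $|S(h_t, a')| \geq 2$. In both cases one can pick $\theta_0 \in S(h_t, a')$ with $\theta_0 < \theta^* \equiv \underline{S}(h_t, a'')$ (in the strict case any $\theta_0 \in S(h_t, a')$ works; in the equality case take $\theta_0 = \underline{S}(h_t, a') < \bar{S}(h_t, a')$). Since $\theta_0 \in S(h_t, a')$, this type plays $a'$ with positive probability in equilibrium, so optimality gives $V(a' \mid h_t, \mathbf{b}, \theta_0) \geq V(a'' \mid h_t, \mathbf{b}, \theta_0)$. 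The aim is to derive a contradiction by showing a strictly positive lower bound on $V(a'' \mid h_t, \mathbf{b}, \theta_0) - V(a' \mid h_t, \mathbf{b}, \theta_0)$ as $dt \to 0$.

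To bound $V(a'' \mid h_t, \mathbf{b}, \theta_0)$ from below, I would note that $\theta_0 \notin S(h_t, a'')$, so the posterior $p(h_t, a'', x_t)$ places full mass on types $\geq \theta^* > \theta_0$. After playing $a''$, type $\theta_0$ can take an off-path action at $t+dt$, which by (NDOC-P) triggers belief $\delta_{\theta^*}$; Lemma \ref{lem:obs1} then lets him play the myopic best response thereafter forever. This gives $V(a'' \mid h_t, \mathbf{b}, \theta_0) \geq \max_a \tilde{u}(a, \delta_{\theta^*}, \theta_0)/r + O(dt)$. For the upper bound on $V(a' \mid h_t, \mathbf{b}, \theta_0)$, I would use (NDOC) together with weak (MON): since $\theta_0 \in S(h_t, a')$, every on-path continuation after $a'$ has $\theta_0$ in the support (off-path deviations by $\theta_0$ only reveal him via (NDOC-P)), so the belief is FOSD-below $\delta_{\bar{S}(h_t, a')}$ and the flow payoff is bounded by $\max_a \tilde{u}(a, \delta_{\bar{S}(h_t, a')}, \theta_0)$, giving $V(a' \mid h_t, \mathbf{b}, \theta_0) \leq \max_a \tilde{u}(a, \delta_{\bar{S}(h_t, a')}, \theta_0)/r + O(dt)$.

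The net deviation gain is then at least $\big(\max_a \tilde{u}(a, \delta_{\theta^*}, \theta_0) - \max_a \tilde{u}(a, \delta_{\bar{S}(h_t, a')}, \theta_0)\big)/r - O(dt)$. When $\theta^* > \bar{S}(h_t, a')$, the strict part of (MON) (with $\delta_{\theta_0}$ as the reference) together with compactness of $A$ and finiteness of $\Theta$ forces this bracket to be strictly positive; when $\theta^* = \bar{S}(h_t, a')$ with $|S(h_t,a')| \geq 2$, the upper bound on $V(a')$ is slack because $\theta_0$'s positive mass prevents the on-path posterior from ever equaling $\delta_{\theta^*}$, yielding a strict flow-payoff gap at every period via weak (MON). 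For $dt$ small enough the gain dominates the $O(dt)$ term, contradicting optimality of $a'$ for $\theta_0$. The main obstacle is precisely this last step: turning the pointwise strict (MON) inequality into a gap that is uniform across the continuation histories, which likely requires exploiting the finite type space and upper semi-continuity of $\tilde{u}$ to rule out pathologies where the gap vanishes asymptotically along the pooling path.
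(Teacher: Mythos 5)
Your handling of the strict case ($\bar{S}(h_t,a') < \underline{S}(h_t,a'')$) is essentially sound and close in spirit to the paper's: the paper's deviation is to play $a''$ at $h_t$ and then \emph{continue with the same continuation strategy} as the on-path $\mathbf{a}'$, so the per-period comparison is between identical actions evaluated at reputations $\geq_{FOSD}\delta_{\underline{S}(h_t,a'')}$ versus $\leq_{FOSD}\delta_{\bar{S}(h_t,a')}$, while your ``reveal at $t+dt$ and play myopically forever'' deviation delivers the same bounded-away-from-zero continuation gain. (In both arguments the strict comparison between the two Dirac reputations really requires taking $\theta_0=\bar{S}(h_t,a')$, since the strict part of (MON) is only stated relative to $\delta_{\theta_0}$; the paper is equally casual on this point, so I do not count it against you.)

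The genuine gap is in the equality case, which is precisely the part of the lemma that does real work downstream (it is what forces $\bar{S}(\underline{h}_{s+dt})>\underline{S}(\bar{h}_{s+dt})$ in the proof of Lemma~\ref{lem:onion2}). There your bracket $\max_a\tilde{u}(a,\delta_{\theta^*},\theta_0)-\max_a\tilde{u}(a,\delta_{\bar{S}(h_t,a')},\theta_0)$ is identically zero, and your claim that the upper bound on $V(a'\mid h_t,\mathbf{b},\theta_0)$ is slack because of ``a strict flow-payoff gap at every period via weak (MON)'' does not follow: weak (MON) yields only weak inequalities, and the strict part of (MON) bounds $\tilde{u}(a,p_s,\theta_0)$ from \emph{below} by $\tilde{u}(a,\delta_{\theta_0},\theta_0)$, not from above by anything strictly short of $\tilde{u}(a,\delta_{\theta^*},\theta_0)$. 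Moreover, as you concede in your final sentence, even a pointwise strict gap could vanish asymptotically because the posterior weight on $\theta_0$ along the pooling path may tend to zero, so no uniform bound is available this way. The paper does not need one. It takes $\theta_0=\underline{S}(h_t,a')$, uses the same two strategies as in the strict case so that $\varDelta U>0$ for every $dt$, and then argues from the incentive constraint itself: the one-period flow advantage of $a'$ over $a''$ is $O(dt)$, so optimality of $a'$ forces $\varDelta U\to 0$ as $dt\to 0$, which in turn forces the period-$t$ posterior $\mathbb{E}[p(h_t,a',x_t)]$ to converge to $\delta_{\bar{S}(h_t,a')}$ --- impossible, since its support is the fixed non-singleton set $S(h_t,a')$. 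That reversal --- extracting a necessary condition on the posterior from the IC constraint rather than exhibiting a uniform lower bound on the deviation gain --- is the missing idea; without it, or some substitute, your equality case does not close.
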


\begin{proof}
	Assume by contradiction that $\bar{S}(h_t, a') < \underline{S}(h_t, a'')$ for some $a',a'' \in A(h_t)$. Pick any type $\theta \in S(h_t, a')$ and any strategy $\mathbf{a}'$ on path for $\theta$ at $h_t$. 
	Construct strategy $\mathbf{a}''$ as $\mathbf{a}''(h_t) = a''$ and $\mathbf{a}''(h_s) = \mathbf{a}'(h_s)$ for all $h_s \succ h_t$. This strategy constitutes a profitable deviation for $\theta$ at $h_t$. To see this, observe that the agent's lifetime utility can be written as
	\begin{align*}
		U(\mathbf{a} | h_t, \mathbf{b}, \theta) \equiv \mathbb{E} & \Bigg[ \tilde{u} \Big(\mathbf{a}(h_t), p(h_t,\mathbf{a}(h_t),x_t), \theta \Big) dt +  
		\\
		\nonumber
		&+ \sum_{s \in \mathcal{T}, s > t} e^{-r(s-t)} \tilde{u} \Big(\mathbf{a}(h_s), p(h_s,\mathbf{a}(h_s),x_s), \theta \Big) dt \mid h_t, \theta \Bigg].
	\end{align*}
	Since $p(h_s,\mathbf{a}''(h_s),x_s) \geq_{FOSD} \delta_{\underline{S} (h_t,a'')} >_{FOSD} \delta_{\bar{S} (h_t,a')} \geq_{FOSD} p(h_s,\mathbf{a}'(h_s),x_s)$ for any $x_s$ and all $h_s \succ h_t$, (MON) implies that
	\begin{align}
		\label{eq:Udiff}
		U(\mathbf{a}'' | h_t, \mathbf{b}, \theta) - U(\mathbf{a}' | h_t, \mathbf{b}, \theta) &\geq \mathbb{E} \Big[ \tilde{u} \big(a'', p(h_t,a'',x_t), \theta \big) - \tilde{u} \big(a', p(h_t,a',x_t), \theta \big) \Big] dt + \varDelta U,
	\end{align}
	where
	\begin{align*}
		\varDelta U \equiv \mathbb{E} \left[ \sum_{s \in \mathcal{T}, s > t} e^{-r(s-t)} \left[ \tilde{u} \Big(\mathbf{a}''(h_s), \delta_{\underline{S} (h_t,a'')}, \theta \Big) - \tilde{u} \Big(\mathbf{a}'(h_s), \delta_{\bar{S} (h_t,a')}, \theta \Big) \right] dt \mid h_t, \theta \right].
	\end{align*}
	By construction of $\mathbf{a}''$, $\mathbf{a}''(h_s) = \mathbf{a}'(h_s)$ for all $h_s \succ h_t$, hence $\tilde{u} \Big(\mathbf{a}''(h_s), \delta_{\underline{S} (h_t,a'')}, \theta \Big) = \tilde{u} \Big(\mathbf{a}'(h_s), \delta_{\underline{S} (h_t,a'')}, \theta \Big)$. By assumption, ${\underline{S} (h_t,a'')} > {\bar{S} (h_t,a')}$, hence (MON) implies $\tilde{u} \Big(\mathbf{a}''(h_s), \delta_{\underline{S} (h_t,a'')}, \theta \Big) > \tilde{u} \Big(\mathbf{a}'(h_s), \delta_{\bar{S} (h_t,a')}, \theta \Big)$ for all $h_s \succ h_t$. We conclude that $\varDelta U > 0$.
	Further, value of $\varDelta U$ is strictly positive regardless of $dt$.\footnote{This is in the sense that for any limit strategy $\mathbf{a}'$ and any outcome path $\mathbf{x}$, it holds by (MON) that $\int_t^\infty  e^{-r(s-t)} \left[ \tilde{u} \Big(\mathbf{a}'(h_s), \delta_{\underline{S} (h_t,a'')}, \theta \Big) - \tilde{u} \Big(\mathbf{a}'(h_s), \delta_{\bar{S} (h_t,a')}, \theta \Big) \right] dt > 0$, hence the expectation is strictly positive as well.}
	For $dt$ small enough, it dominates the first term on the RHS of \eqref{eq:Udiff} because $u$ is bounded, implying that for small enough $dt$, $U(\mathbf{a}'' | h_t, \mathbf{b}, \theta) > U(\mathbf{a}' | h_t, \mathbf{b}, \theta)$ which contradicts $\mathbf{a}'$ being optimal for $\theta$ at $h_t$.
	
	Now suppose $\bar{S}(h_t, a') = \underline{S}(h_t, a'')$. Suppose by way of contradiction that $|S(h_t, a')| > 1$, meaning $\underline{S}(h_t, a') < \underline{S}(h_t, a'')$.
	Let $p' \equiv \mathbb{E} [p(h_t,a',x_t) \mid h_t,a']$ be the belief induced by action $a'$ alone (note $p' <_{FOSD} \delta_{\bar{S} (h_t,a')} = \delta_{\underline{S}(h_t, a'')}$). 
	Consider type $\underline{\theta} \equiv \underline{S}(h_t,a')$. This type $\underline{\theta}$ must have an on-path strategy $\mathbf{a}'$ that yields reputation $p \not >_{FOSD} p'$ at all histories $h_s \succ (h_t,a')$ with positive probability (with certainty if outcomes $x$ are uninformative).\footnote{This follows from the observation that beliefs are correct in equilibrium, hence at any $h_s$ there must exist an action $a_s$ on path for $\underline{\theta}$ at $h_s$ that weakly increases the probability that the receiver assigns to type $\underline{\theta}$: $\mathbb{E}[p(h_s,a_s,x_s)(\underline{\theta}) \mid h_s, a_s] \geq p(h_s)(\underline{\theta})$.} 
	Construct strategy $\mathbf{a}''$ as above. Then inequality \eqref{eq:Udiff} holds with
	\begin{align*}
		\varDelta U \equiv \mathbb{E} \left[ \sum_{s \in \mathcal{T}, s > t} e^{-r(s-t)} \left[ \tilde{u} \Big(\mathbf{a}''(h_s), \delta_{\underline{S} (h_t,a'')}, \theta \Big) - \tilde{u} \Big(\mathbf{a}'(h_s), p(h_s,\mathbf{a}'(h_s),x_s), \theta \Big) \right] dt \mid h_t, \theta \right] > 0,
	\end{align*} 
	which is positive since $\tilde{u} \Big(\mathbf{a}''(h_s), \delta_{\underline{S} (h_t,a'')}, \theta \Big) = \tilde{u} \Big(\mathbf{a}'(h_s), \delta_{\underline{S} (h_t,a'')}, \theta \Big) > \tilde{u} \Big(\mathbf{a}'(h_s), p(h_s,\mathbf{a}'(h_s),x_s), \theta \Big)$ for all $h_s \succ h_t$ by the same argument as above.
	Unlike in the previous case, $\varDelta U$ is not bounded away from zero for all $dt$. However, for the IC condition $U(\mathbf{a}' | h_t, \mathbf{b}, \theta) \geq U(\mathbf{a}'' | h_t, \mathbf{b}, \theta)$ to hold, we must have that $\varDelta U \to 0$ as $dt \to 0$. This would require that $p' \to \delta_{\bar{S} (h_t,a')}$ as $dt \to 0$, which yields a contradiction in the limit, since $S(p') = S(h_t,a') \supsetneq \{ \bar{S}(h_t,a') \}$.
\end{proof}

Next, Lemma \ref{lem:onion1} puts the (SC) property to use, establishing a form of monotonicity of optimal strategies w.r.t. type (``higher types play higher strategies''). The main problem in the dynamic setting is the lack of any nice complete order over strategies $\mathbf{a}$, so given two arbitrary strategies, we generally cannot say which one of them is ``higher''. Therefore, we rephrase this monotonicity result to say instead that if a given strategy (or its equivalent) is optimal for two agent types, then it must also be optimal for all types in between. We cannot say with certainty that the given strategy is chosen on equilibrium path by any of these types in between, but we can claim that any strategy they play must be payoff-equivalent to the one under consideration.

\begin{lemma} \label{lem:onion1}
	Suppose (SC) holds. Fix any equilibrium $(\alpha,\mathbf{b},\theta)$ (for any $dt>0$) and history $h_t \in \mathcal{H}$.
	If there exists a pair of strategies $\underline{\mathbf{a}},\bar{\mathbf{a}} \curlywedge h_t$ that are payoff-equivalent at $h_t$ and are on path at $h_t$ for some types $\underline{\theta}$ and	$\bar{\theta} > \underline{\theta}$ respectively, then any strategy $\hat{\mathbf{a}} \curlywedge h_t$ on path at $h_t$ for any $\hat{\theta} \in (\underline{\theta}, \bar{\theta})$ must be payoff-equivalent at $h_t$ to $\bar{\mathbf{a}},\underline{\mathbf{a}}$.
\end{lemma}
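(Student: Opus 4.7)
The plan is to apply (SC) to the pair of strategies $(\underline{\mathbf{a}}, \hat{\mathbf{a}})$, which qualifies for (SC) since each is optimal at $h_t$ for at least one type in $S(h_t)$: $\underline{\mathbf{a}}$ for $\underline{\theta}$ and $\hat{\mathbf{a}}$ for $\hat{\theta}$. Define $\mathcal{U}(\theta) \equiv U(\hat{\mathbf{a}} \mid h_t, \mathbf{b}, \theta) - U(\underline{\mathbf{a}} \mid h_t, \mathbf{b}, \theta)$. My goal is to show $\mathcal{U}$ falls into the ``identically zero'' branch of (SC), which immediately delivers payoff-equivalence of $\hat{\mathbf{a}}$ and $\underline{\mathbf{a}}$ at $h_t$; payoff-equivalence with $\bar{\mathbf{a}}$ then follows by transitivity from the hypothesized equality $U(\underline{\mathbf{a}} \mid h_t, \mathbf{b}, \cdot) \equiv U(\bar{\mathbf{a}} \mid h_t, \mathbf{b}, \cdot)$ on $S(h_t)$.

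First, I would pin down the sign of $\mathcal{U}$ at the three given types. Optimality of $\underline{\mathbf{a}}$ for $\underline{\theta}$ yields $\mathcal{U}(\underline{\theta}) \leq 0$; optimality of $\hat{\mathbf{a}}$ for $\hat{\theta}$ yields $\mathcal{U}(\hat{\theta}) \geq 0$; and optimality of $\bar{\mathbf{a}}$ for $\bar{\theta}$, combined with the hypothesized equality $U(\bar{\mathbf{a}} \mid h_t, \mathbf{b}, \bar{\theta}) = U(\underline{\mathbf{a}} \mid h_t, \mathbf{b}, \bar{\theta})$, yields $\mathcal{U}(\bar{\theta}) \leq 0$. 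This produces a zigzag sign pattern $(\leq 0,\, \geq 0,\, \leq 0)$ across $\underline{\theta} < \hat{\theta} < \bar{\theta}$, which is incompatible with any function that merely crosses zero at most once: strict positivity $\mathcal{U}(\hat{\theta}) > 0$ would entail sign changes in both $(\underline{\theta}, \hat{\theta})$ and $(\hat{\theta}, \bar{\theta})$. By (SC), the only remaining option is the identically-zero branch, giving $\mathcal{U} \equiv 0$ on all of $S(h_t)$.

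I expect the main difficulty to lie in making this final inference fully airtight. The weak zigzag constraints alone admit a degenerate ``single touch'' configuration, where $\mathcal{U}(\hat{\theta}) = 0$ while $\mathcal{U}$ is strictly negative at both endpoints, and such a configuration is compatible with loose readings of ``crosses zero at most once.'' Closing this gap requires either a strict interpretation of (SC) as mandating weakly monotone sign behavior in $\theta$, or an auxiliary argument that exploits the full hypothesis — that $\underline{\mathbf{a}}$ and $\bar{\mathbf{a}}$ are payoff-equivalent at \emph{every} $\theta \in S(h_t)$, so in particular $\underline{\mathbf{a}}$ is optimal for both $\underline{\theta}$ and $\bar{\theta}$ simultaneously. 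This stronger observation should permit iterating (SC) across additional strategy pairs, in order to pin down $\mathcal{U}$ at types in $S(h_t) \setminus \{\underline{\theta}, \hat{\theta}, \bar{\theta}\}$ for which no direct sign bound is available from the three-point argument above.
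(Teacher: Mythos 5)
Your proposal is correct and follows essentially the same route as the paper's own proof: the paper likewise derives the three optimality inequalities (using the pair $\bar{\mathbf{a}},\hat{\mathbf{a}}$ rather than your $\underline{\mathbf{a}},\hat{\mathbf{a}}$, an immaterial sign flip), obtains the zigzag sign pattern across $\underline{\theta} < \hat{\theta} < \bar{\theta}$, and concludes from (SC) that the middle inequality cannot be strict. The residual ``single touch'' issue you flag is real but is left equally implicit in the paper, which reads (SC) as forcing the identically-zero branch once the zigzag rules out a strict sign change.
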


\begin{proof}
	Fix any such $\hat{ \mathbf{a}}$.
	Strategy $\bar{\mathbf{a}}$ has to be optimal for type $\bar{\theta}$. In particular, when evaluated at $h_t$, it has to be better than $\hat{\mathbf{a}}$:
	\begin{align*}
	U (\bar{\mathbf{a}} | h_t, \mathbf{b}, \bar{\theta}) \geq U (\hat{\mathbf{a}} | h_t, \mathbf{b}, \bar{\theta}).
	\end{align*}
	The same holds for type $\underline{\theta}$, since $\bar{\mathbf{a}}$ and $\underline{\mathbf{a}}$ are payoff-equivalent:
	\begin{align*}
	U (\bar{\mathbf{a}} | h_t, \mathbf{b}, \underline{\theta}) = U (\underline{\mathbf{a}} | h_t, \mathbf{b}, \underline{\theta}) \geq U (\hat{\mathbf{a}} | h_t, \mathbf{b}, \underline{\theta}).
	\end{align*}
	At the same time, $\hat{\theta}$ at least weakly prefers $\hat{\mathbf{a}}$ to $\bar{\mathbf{a}}$, meaning that the converse holds for $\hat{\theta}$:
	\begin{equation*}
	U (\bar{\mathbf{a}} | h_t, \mathbf{b}, \hat{\theta}) \leq U (\hat{\mathbf{a}} | h_t, \mathbf{b}, \hat{\theta}).
	\end{equation*}
	If this inequality is strict, then this is a direct contradiction with (SC), which requires that $U (\bar{\mathbf{a}} | h_t, \mathbf{b}, \theta) - U (\hat{\mathbf{a}} | h_t, \mathbf{b}, \theta)$ as a function of $\theta$ either crosses zero at most once, or is exactly zero.
\end{proof}

Lemma \ref{lem:onion2} below is the final step before we can move on to the proofs of main results. It can be seen as a weaker version of Proposition \ref{prop:attrn}, claiming that the highest and lowest types at any history have a strategy in common.

\begin{lemma} \label{lem:onion2}
	Suppose (MON) and (SC) hold and $dt \to 0$. Fix an equilibrium $(\alpha,\mathbf{b},p)$ and any $h_t \in \mathcal{H}$. 
	There exist $h_t$-payoff-equivalent strategies $\bar{\mathbf{a}}, \underline{\mathbf{a}} \curlywedge h_t$, on path at $h_t$ for $\bar{S}(h_t)$ and $\underline{S}(h_t)$ respectively.
\end{lemma}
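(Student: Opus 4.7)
The plan is to proceed by induction on $|S(h_t)|$. The base case $|S(h_t)|=1$ is immediate: $\bar{S}(h_t)=\underline{S}(h_t)$, so any single on-path strategy for this unique type trivially serves as both $\bar{\mathbf{a}}$ and $\underline{\mathbf{a}}$, and the payoff-equivalence condition is vacuous.

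For the inductive step with $|S(h_t)|=k\geq 2$, I would fix arbitrary on-path strategies $\bar{\mathbf{a}}^{\circ}\curlywedge h_t$ for $\bar{\theta}\equiv\bar{S}(h_t)$ and $\underline{\mathbf{a}}^{\circ}\curlywedge h_t$ for $\underline{\theta}\equiv\underline{S}(h_t)$. If they are already $h_t$-payoff-equivalent, there is nothing to prove. Otherwise, Lemma \ref{lem:supp1} constrains the initial-action supports to overlap in the interval sense, and combining this with (SC) applied at the strategy level rules out any configuration in which $\underline{\theta}$ and $\bar{\theta}$ share an initial action while some intermediate type strictly separates to a different action, since the associated sign-change pattern $(+,-,+)$ in the inter-strategy utility difference would directly contradict the at-most-one-crossing property. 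This structural restriction lets me select an on-path pair sharing the initial action, $\bar{\mathbf{a}}(h_t)=\underline{\mathbf{a}}(h_t)$.

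Once the candidates match at $h_t$, the two strategies lead to a common continuation history $h_{t+dt}$, and establishing $h_t$-payoff-equivalence reduces to establishing payoff-equivalence of the continuations, evaluated on all of $S(h_t)$. When $|S(h_{t+dt})|<|S(h_t)|$ the inductive hypothesis yields payoff-equivalent continuations at $h_{t+dt}$ for the endpoints of $S(h_{t+dt})$; (NDOC) guarantees these endpoints are still $\bar{\theta}$ and $\underline{\theta}$ because both extreme types have kept playing the shared on-path action. For types in $S(h_t)\setminus S(h_{t+dt})$, payoff-equivalence of the two hypothetical continuations is recovered from Lemma \ref{lem:onion1}-style (SC) extension applied to the already-established boundary equivalence.

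The principal obstacle lies in the case where the shared initial action is fully pooling, so $S(h_{t+dt})=S(h_t)$ and plain induction on support size stalls. Here the continuous-time limit $dt\to 0$ is essential: the period-$t$ flow contribution to any payoff gap is $O(dt)$ and vanishes, so the analysis defers to the tail beginning at $h_{t+dt}$. Either the support eventually shrinks at some later history on the pooling path (invoking induction at that history), or it remains full forever, in which case the two candidate strategies induce identical belief paths from $h_t$ onwards and hence identical discounted payoffs for every type by direct computation. Making this tail argument rigorous — ensuring that a persistent payoff wedge cannot accumulate between perpetually pooling strategies without ever triggering support shrinkage — is what I expect to be the most delicate step, relying on the boundedness of $u$ together with the $dt\to 0$ limit (in the same spirit as the deviation argument used in the proof of Lemma \ref{lem:supp1}).
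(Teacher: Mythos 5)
Your overall architecture (induction on $|S(h_t)|$, reduce to a shared action, defer along the pooling path using $dt\to 0$) matches the paper's in outline, but there is a genuine gap at the heart of the inductive step: you assert that you can always ``select an on-path pair sharing the initial action,'' and your justification does not deliver this. What you rule out via (SC) is the configuration in which $\underline{\theta}$ and $\bar{\theta}$ \emph{do} share an action while an intermediate type strictly separates; that says nothing about the case in which $\underline{\theta}$ and $\bar{\theta}$ share \emph{no} on-path action at $h_t$. Lemma \ref{lem:supp1} does not exclude this: with $\Theta=\{1,2,3,4\}$, one action with support $\{1,2,3\}$ and another with support $\{2,3,4\}$ satisfies the overlap condition, yet the extreme types are split. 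This ``splitting history'' is precisely the case where payoff-relevant information is revealed, so it cannot be assumed away, and it is the case the paper's proof is actually about.

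The paper handles it as follows: fix an outcome profile $\mathbf{x}$ and iterate forward from $h_t$ along shared actions until the first splitting history $h_s$. There, take $\bar{a}$ on path for $\bar{S}(h_t)$ and $\underline{a}$ on path for $\underline{S}(h_t)$; because $h_s$ is splitting, both continuation supports $S(h_s,\bar{a})$ and $S(h_s,\underline{a})$ drop strictly below $k$, so the induction hypothesis applies to each continuation history separately. This produces a payoff-equivalent pair linking $\bar{S}(h_s)$ down to $\underline{S}(h_s,\bar a)$, and another linking $\bar{S}(h_s,\underline a)$ down to $\underline{S}(h_s)$. The two chains are then stitched together using Lemma \ref{lem:supp1} (which forces $\bar{S}(h_s,\underline{a})>\underline{S}(h_s,\bar{a})$, i.e., the supports interleave) and Lemma \ref{lem:onion1} (the single-crossing sandwich), yielding payoff-equivalence between the strategies of $\bar{S}(h_s)$ and $\underline{S}(h_s)$. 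The result is then pulled back to $h_t$ along the non-splitting prefix, first conditional on $\mathbf{x}$ and then unconditionally. Your proposal contains no substitute for this stitching step, so the inductive step does not close. (Two smaller points: the paper's induction needs the two-type base case from Theorem \ref{thm:2types}, not just $|S|=1$; and in the ``pools forever'' branch you worry about, one can simply take the single common always-pooling strategy as both $\bar{\mathbf{a}}$ and $\underline{\mathbf{a}}$, so no delicate tail estimate is needed there.)
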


\begin{proof}
	We will proceed by induction on the support size $|S(h_t)|$. The claim of the lemma holds trivially for $|S(h_t)|=1$, and by Theorem \ref{thm:2types} it also holds for $|S(h_t)|=2$. The remainder of the proof shows that if the claim holds when $|S(h_t)|=k-1$ for $k \geq 3$, then it also holds when $|S(h_t)|=k$.
	Let $\mathbf{x} : \mathcal{H} \to X$ denote an outcome profile which prescribes some outcome for every history. Fix some $\mathbf{x}$. Coupled with some pure strategy of the agent, the receiver's equilibrium strategy $\mathbf{b}$ and the equilibrium belief system $p$, it fully determines the path of play and the agent's payoffs. 
	
	Begin the second layer of induction, iterating forwards on time periods from $t$. At $h_t$ and any subsequent history $h_s \succ h_t$, one of the following must apply:
	\begin{enumerate}
		\item There is an action $a$ on path for both types $\bar{S}(h_t)$ and $\underline{S}(h_t)$ at $h_s$. If this is the case, call $h_s$ a \emph{non-splitting} history and continue to $h_{s+dt} = (h_s, a, \mathbf{x}(h_s))$.
		\item There is no action $a$ on path for both $\bar{S}(h_t)$ and $\underline{S}(h_t)$ at $h_s$. If this is the case, call $h_s$ a \emph{splitting} history.
	\end{enumerate}
	Proceed along the non-splitting path (according to the chosen $\mathbf{x}$) until the first splitting history $h_s$. Pick arbitrary actions $\bar{a}$ and $\underline{a}$ that are on path for $\bar{S}(h_t)$ and $\underline{S}(h_t)$ at $h_s$ respectively, and consider two continuation histories $\bar{h}_{s+dt} \equiv (h_s, \bar{a}, \mathbf{x}(h_s))$ and $\underline{h}_{s+dt} \equiv (h_s, \underline{a}, \mathbf{x}(h_s))$. Then we have that $|S(h_{s+dt})| < |S(h_s)| = k$ for both continuation histories, because $S(\bar{h}_{s+dt}) \subseteq S(h_s) \backslash \underline{S}(h_s)$ and $S(\underline{h}_{s+dt}) \subseteq S(h_s) \backslash \bar{S}(h_s)$. Therefore, by the induction assumption, the statement of the lemma holds at both $\bar{h}_{s+dt}$ and $\underline{h}_{s+dt}$. 
	
	In particular, the statement of the lemma for $\bar{h}_{s+dt}$ states that there exist two $\bar{h}_{s+dt}$-payoff-equivalent strategies on path at $\bar{h}_{s+dt}$ for $\bar{S}(h_s)$ and $\underline{S}(\bar{h}_{s+dt})$ respectively.
	Playing $\bar{a}$ at $h_s$ is on path for both of these types, hence there also exists a pair of strategies on path at $h_s$ for the two types respectively, which grant the same payoff \emph{conditional on $\mathbf{x}$}.\footnote{It does not matter for our argument if all types assign probability zero to outcome $\mathbf{x}(h_s)$ conditional on $\bar{a}$.}
	However, the argument above applies to any outcome profile $\mathbf{x}$ and, in particular, to any outcome $\mathbf{x}(h_s)$, hence there also exists a pair of strategies $\bar{ \mathbf{a}}', \bar{ \mathbf{a}}''$ on path at $h_s$ for the two types $\bar{S}(h_s)$ and $\underline{S}(\bar{h}_{s+dt})$ respectively, which are payoff-equivalent at $h_s$ (unconditionally).
	
	By a mirror argument, there also exists a pair of $h_s$-payoff-equivalent strategies $\underline{ \mathbf{a}}', \underline{ \mathbf{a}}''$ on path at $h_s$ for $\underline{S}(h_s)$ and $\bar{S}(\underline{h}_{s+dt})$ respectively. Note further that by Lemma \ref{lem:supp1} we have that $\bar{S}(\underline{h}_{s+dt}) > \underline{S}(\bar{h}_{s+dt})$. Lemma \ref{lem:onion1} hence applies: $\underline{ \mathbf{a}}''$ must be payoff-equivalent to $\bar{ \mathbf{a}}', \bar{ \mathbf{a}}''$, thus so is $\underline{ \mathbf{a}}'$. We have shown that the statement of the lemma holds at $h_t$ if $|S(h_t)|=k$ and $h_t$ is a splitting history.
	
	We are left to cover non-splitting histories. Suppose $h_t$ is non-splitting. Fix $\mathbf{x}$. Then we know that the statement of the lemma holds at the first splitting history $h_s$ following $h_t$ along the path of pooling actions and fixed outcomes $\mathbf{x}$. Therefore, there exists a pair of strategies on path at $h_t$ for $\bar{S}(h_t)$ and $\underline{S}(h_t)$, which grant the same payoff at $h_t$ conditional on $\mathbf{x}$. This applies to any outcome profile $\mathbf{x}$, hence there exists a pair of strategies $\bar{\mathbf{a}}, \underline{\mathbf{a}}$ on path at $h_t$ for $\bar{S}(h_t)$ and $\underline{S}(h_t)$, which are payoff-equivalent at $h_t$. This concludes the induction argument and the proof of the lemma.
\end{proof}

\begin{proof}[Proof of Proposition \ref{prop:attrn}.]
	Let $\bar{\theta} \equiv \bar{S}(h_t)$. Note that the statement of the proposition holds trivially if $|S(h_t)|=1$, so for the remainder of this proof we assume that this is not the case (i.e., $\bar{ \theta} \neq \underline{ \theta}$).
	From Lemma \ref{lem:onion2} we know there exist $h_t$-payoff-equivalent $\bar{ \mathbf{a}}, \underline{ \mathbf{a}} \curlywedge h_t$ on path at $h_t$ for $\bar{\theta}$ and $\underline{\theta}$ respectively. Then by Lemma \ref{lem:onion1}, any pure strategy $\mathbf{a} \curlywedge h_t$ on path at $h_t$ for any $\theta \in S(h_t) \backslash \{ \bar{\theta}, \underline{\theta} \}$ is payoff-equivalent at $h_t$ to $\bar{ \mathbf{a}}, \underline{ \mathbf{a}}$.
	
	Suppose now there exists a pure strategy $\bar{ \mathbf{a}}' \curlywedge h_t$ on path at $h_t$ for $\bar{\theta}$, which is payoff-distinct at $h_t$ from $\bar{ \mathbf{a}}$. By (SC), all types $\theta \in S(h_t) \backslash \bar{\theta}$ must have a strict preference at $h_t$ between $\bar{ \mathbf{a}}$ and $\bar{ \mathbf{a}}'$. The former is optimal for these types, hence $\bar{ \mathbf{a}}'$ is only on path for $\bar{\theta}$. The two strategies cannot prescribe different actions $\bar{ \mathbf{a}}(h_t) \neq \bar{ \mathbf{a}}'(h_t)$ at $h_t$, since this is in violation of Lemma \ref{lem:supp1}. The same, however, applies to any subsequent history $h_s$ such that $|S(h_s)|>1$. At all $h_s \succ h_t$ s.t. $|S(h_s)|=1$, Lemma \ref{lem:obs1} implies that all pure strategies on path at $h_s$ are $h_s$-payoff-equivalent. Both facts together imply that $\bar{ \mathbf{a}}(h_s) = \bar{ \mathbf{a}}'(h_s)$ for all $h_s \succ h_t$. This contradicts $\bar{ \mathbf{a}}$ and $\bar{ \mathbf{a}}'$ being payoff-distinct, hence such $\bar{ \mathbf{a}}'$ does not exist.
	Therefore, any pure strategy $\mathbf{a}$ on path at $h_t$ that is $h_t$-payoff-distinct from $\bar{ \mathbf{a}}$ is only on path for $\underline{ \theta}$. This concludes the proof.
\end{proof}

\begin{proof}[Proof of Theorem \ref{thm:main}.]
	From Proposition \ref{prop:attrn}, all actions $a \in A(h_t)$ are on path for $\underline{ \theta}$, which proves the first statement of the theorem.
	
	From the fact that payoff-relevant signaling happens at $h_t$ we know that there exist two pure strategies $\underline{ \mathbf{a}},\bar{ \mathbf{a}}$ that are payoff-distinct at $h_t$ and prescribe different actions at $h_t$: $\underline{a} \equiv \underline{ \mathbf{a}}_t \neq \bar{a} \equiv \bar{ \mathbf{a}}_t$. From Proposition \ref{prop:attrn} we know at least one of these strategies -- suppose $\underline{ \mathbf{a}}$ -- is on path for $\underline{ \theta}$ but not for any other $\theta \in S(h_t) \backslash \bar{ \theta}$ at $h_t$.
	Furthermore, it follows from the definition of payoff-relevant signaling that there is no $\bar{ \mathbf{a}}'$ s.t. $\bar{ \mathbf{a}}' \curlywedge h_t$ and $\bar{ \mathbf{a}}'_t = \underline{a}$, and which is payoff-equivalent to $\bar{ \mathbf{a}}$ at $h_t$. Therefore, $\underline{a}$ is only on path for $\underline{ \theta}$, while $\bar{a}$ is optimal for all $\theta \in S(h_t)$ at $h_t$.
	 
	We now show that $\underline{a} \in A^*\left( h_t, \mathbf{b}, \underline{\theta} \right)$. Suppose not. Then type $\underline{\theta}$ can play some $a_s \in A^*\left( h_s, \mathbf{b}, \underline{\theta} \right)$ at every history $h_s \succeq h_t$. Compared to following $\underline{ \mathbf{a}}$, this strategy would yield the same payoff at all times $s > t$ and a strictly higher payoff at $t$ (same as in the proof of Theorem \ref{thm:2types}), hence $\underline{ \mathbf{a}}$ is not optimal for $\underline{\theta}$ at $h_t$ -- a contradiction.
	
	To complete the proof of statements 2 and 3 of the theorem, we need to show that $\bar{a} \notin A^*\left( h_t, \mathbf{b}, \underline{\theta} \right)$. Assume not. Consider the strategy of playing $\bar{a}$ at $h_t$ and all subsequent histories. Compared to following $\underline{ \mathbf{a}}$, this strategy would yield $\underline{\theta}$ a weakly higher payoff at all times $s > t$ and a strictly higher payoff at $t$ (due to $p(h_t, (\bar{a},x_t)) > \delta_{\underline{\theta}}$ for all $x_t$ and to the strict part of (MON)), hence it is a profitable deviation from $\underline{ \mathbf{a}}$ for $\underline{\theta}$ at $h_t$ -- a contradiction.
	
	This completes the proof of Theorem \ref{thm:main}.
\end{proof}
	
\begin{proof}[Proof of Corollary \ref{cor:repsig}.]
	The low type must be indifferent between taking a separating action $\underline{a}$ at $h_t$ and pooling on $\bar{a}$ at $h_t$ and separating at $h_{t+dt}$. This indifference dictates that one period of pooling must be exactly as attractive as one period of being revealed as $\underline{ \theta}$, i.e., $\mathbb{E}_x \left[ \tilde{u} \left(\bar{a},p(h_t, (\bar{a},x)), \underline{\theta}\right) | \underline{ \theta} \right] = \tilde{u} \left(\underline{a}, \delta_{\underline{\theta}}, \underline{\theta} \right)$.
\end{proof}

\begin{proof}[Proof of Corollary \ref{cor:supp}.]
	Proposition \ref{prop:attrn} states that all pure strategies $\mathbf{a}'$ on path at $h_t$ for any $\theta \in S(h_t) \backslash \underline{ S}(h_t)$ are payoff-equivalent at $h_t$. Since there is no payoff-relevant signaling in equilibrium, the set of such strategies is a singleton: if there is more than one then there exists $h_s \succ h_t$ at which the two prescribe different actions, but that constitutes payoff-irrelevant signaling at $h_s$ (the two strategies coincide on $[t,s)$, hence they are payoff-equivalent at $h_s$).
	
	Therefore, at any $h_t$ there exists some $\bar{a} \in A$ such that $\alpha_\theta(\bar{a}|h_t)=1$ for all $\theta \in S(h_t) \backslash \underline{ S}(h_t)$. Together with part 1 of Theorem \ref{thm:main}, this means that $S(h_t, \bar{a}) = S(h_t)$. By part 3 of the theorem, $\bar{a}$ is the unique element of $A(h_t) \backslash A^*\left(  h_t, \mathbf{b}, \underline{\theta} \right)$. By part 2 of the theorem, for any $\underline{a} \in A(h_t) \cap A^*\left( h_t, \mathbf{b}, \underline{\theta} \right)$ we have $S(h_t, \underline{a}) = \underline{S}(h_t)$. Since all on-path histories $h_{t+dt}$ can be written as $h_{t+dt} = (h_t, a, x_t)$ for some $a \in A(h_t)$ and $x_t \in X$, and outcome $x_t$ does not change support $S$, we obtain that for any pair of on-path histories $h_t,h_{t+dt}$, it must be that either $S(h_{t+dt})=S(h_t)$, or  $S(h_{t+dt})= \{\underline{S}(h_t)\}$. Applying this observation iteratively from $h_0$ (for which $S(h_0) = \Theta$) completes the proof.
\end{proof}

\subsection{Proofs: Separable Settings}

\begin{proof}[Proof of Proposition \ref{prop:suff_mon}.]
	It is immediate that if $\phi_1(a,p)$ is weakly increasing in $p$ and $\psi(\theta) \geq 0$ then $\phi_1(a,p) \psi(\theta)$ is weakly increasing in $p$. Together with the assumption that $\phi_0(a,p)$ is weakly increasing in $p$, this implies trivially that \eqref{eq:u_repr} is weakly increasing in $p$, which is what is required by (MON).
\end{proof}

\begin{proof}[Proof of Proposition \ref{prop:suff_sc}.]
	For some fixed $\mathbf{a}',\mathbf{a}''$, $h_t$, the function $\mathcal{U}(\theta)$ under representation \eqref{eq:u_repr} is given by
	\begin{align*}
		\mathcal{U}(\theta) 
		&= \mathbb{E} \left[ \sum_{s \in \mathcal{T}, s \geq t} \left[ \phi_0 \left( \mathbf{a}''(h_s),p(h_s) \right) - \phi_0 \left(\mathbf{a}'(h_s),p(h_s) \right) \right] dt + \right. 
		\\
		& \hphantom{= \mathbb{E} \Bigg[ }
		\left. + \sum_{s \in \mathcal{T}, s \geq t} \left[ \phi_1 \left( \mathbf{a}''(h_s),p(h_s) \right) - \phi_1 \left(\mathbf{a}'(h_s),p(h_s) \right) \right] \psi(\theta) dt \mid h_t, \mathbf{b}, \theta \right] 
		\\
		&= \left( \sum_{s \in \mathcal{T}, s \geq t} \left[ \phi_1 \left( \mathbf{a}''(h_s),p(h_s) \right) - \phi_1 \left(\mathbf{a}'(h_s),p(h_s) \right) \right] dt \right) \psi(\theta),
	\end{align*}
	where the expectation in the first two lines is over outcomes, hence by assumption that outcomes are uninformative, the expectation is resolved trivially, since strategies $\mathbf{a}, \mathbf{b}$ then fully determine the path of play.
	It then follows immediately that if $\psi(\theta)$ is strictly monotone, then $\mathcal{U}(\theta)$ is either strictly monotone, or equivalently zero (if the term in brackets evaluates to zero), hence (SC) holds.
\end{proof}

\section{Informative Equilibrium in the Price Signaling Model} \label{sub:antitrust_exp}

This appendix constructs an informative equilibrium in the setting of Section \ref{sub:antitrust}.
Let $\Theta = \{1,2,3\}$, $\lambda(a) = \lambda a$ for some constant $\lambda >0$, and $\gamma(p) = 1 - \mathbb{E}[\theta|p]/3$. We will construct an equilibrium in which payoff-relevant signaling occurs in all periods $t \in [0,\bar{t}] \cap \mathcal{T}$ for some $\bar{t} \in \mathcal{T}$. According to Theorem \ref{thm:main}, in an equilibrium with no payoff-irrelevant signaling, there will be two prices on the equilibrium path: one pooling price $a_P(h_t)$ set by all types with positive probability and a separating price $a_S(h_t)=a_S$ only set with positive probability by the lowest type $\theta=1$.\footnote{Note that the lowest type $\theta=1$ does, in fact, reap the highest payoff of any type. However, $\theta=1$ is the least desirable type reputation-wise, since it increases the probability of being regulated by the authority.}
The separating price must be a bliss price for the lowest type $\theta=1$. The bliss price for type $\theta$ can be calculated by maximizing the flow payoff:
\begin{align*}
	a_S(\theta) &= \arg \max_{a \in A} \left\{ \tilde{u}(a,p,\theta) \right\}
	\\
	&= \arg \max_{a \in A} \left\{ a (1 - \theta a) - \lambda a \gamma(p) F \right\}
	\\
	&= \arg \max_{a \in A} \left\{ a z(p_t) - \theta a^2 \right\}
	\\
	&= \frac{z(p_t)}{2\theta},
\end{align*}
where $z(p_t) \equiv 1 - \lambda F \left(1 - \frac{\mathbb{E}[\theta|p_t]}{3} \right)$. Separation (or, by (NDOC-P), any other deviation from the pooling strategy) perfectly reveals the lowest type, hence $p_t = \delta_1 \equiv \delta_{\theta=1}$ in that case, and so the separating price for type $\theta=1$ is $a_S(1) = \frac{z(\delta_{1})}{2}$.

To calculate the pooling price $a_P(h_t)$, invoke Corollary \ref{cor:repsig} which states the the lowest type $\theta=1$ must receive the same payoff from both pooling and separating:
\begin{align*}
	\tilde{u}(a_P(p_t),p_t,1) &= \tilde{u}(a_S(1),\delta_1,1)
	\\
	\Leftrightarrow a_P(p_t) z(p_t) - a_P^2(p_t) &= a_S(1) z(\delta_1) - a_S^2(1)
	\\
	\Rightarrow a_P(p_t) &= \frac{1}{2} \left[ z(p_t) \pm \sqrt{z^2(p_t) - z^2(\delta_1)} \right].
\end{align*}
It makes sense to select the negative root, since the intuition behind the problem suggests $a_P(h_t) < a_S(1)$: convincing the regulator that the market is competitive involves setting prices below the monopoly price, not above. 

To finish the equilibrium construction, it remains to pick an arbitrary prior belief $p_0$, per-period separation probability for the lowest type (which pins down the evolution of $p_t$), calculate the resulting path of $a_P(h_t)$, and verify that it is incentive compatible for the competitive types to join in on the pooling price $a_P(h_t)$, rather than separating away to $a_S(\theta) = \frac{z(\delta_1)}{2 \theta}$. This IC condition is difficult to verify analytically (and it may or may not hold, depending on the parameters, possibly leading to nonexistence of an informative equilibrium), but can be verified numerically. 

\begin{figure}[h]
	\centering
	\subfloat[The pooling and the pre-investigation bliss prices.]{
		\includegraphics[scale=0.5]{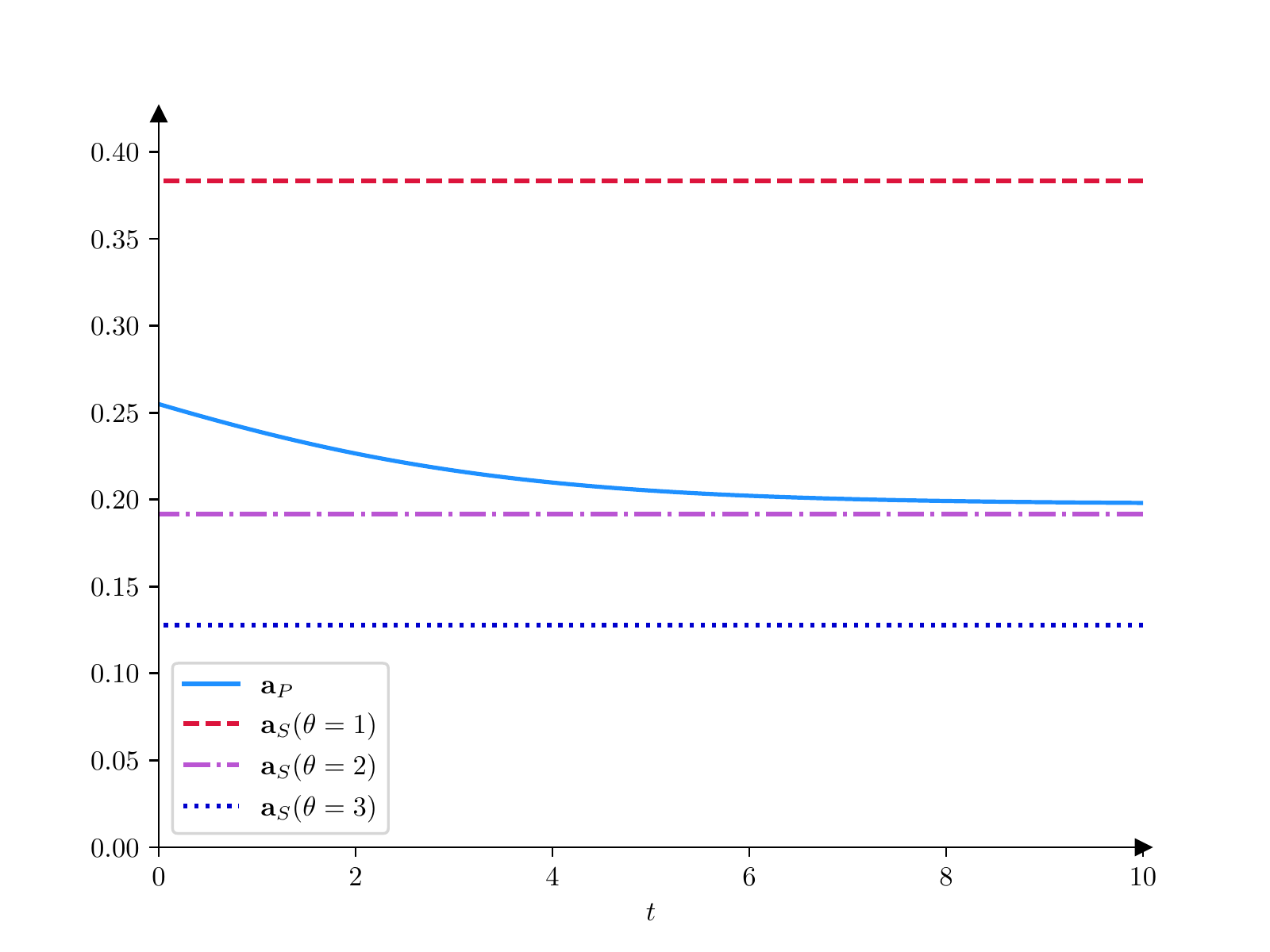}
		\label{fig:eq:1}
	}
	\subfloat[Values of the three types from pooling and separating.]{ 
		\includegraphics[scale=0.5]{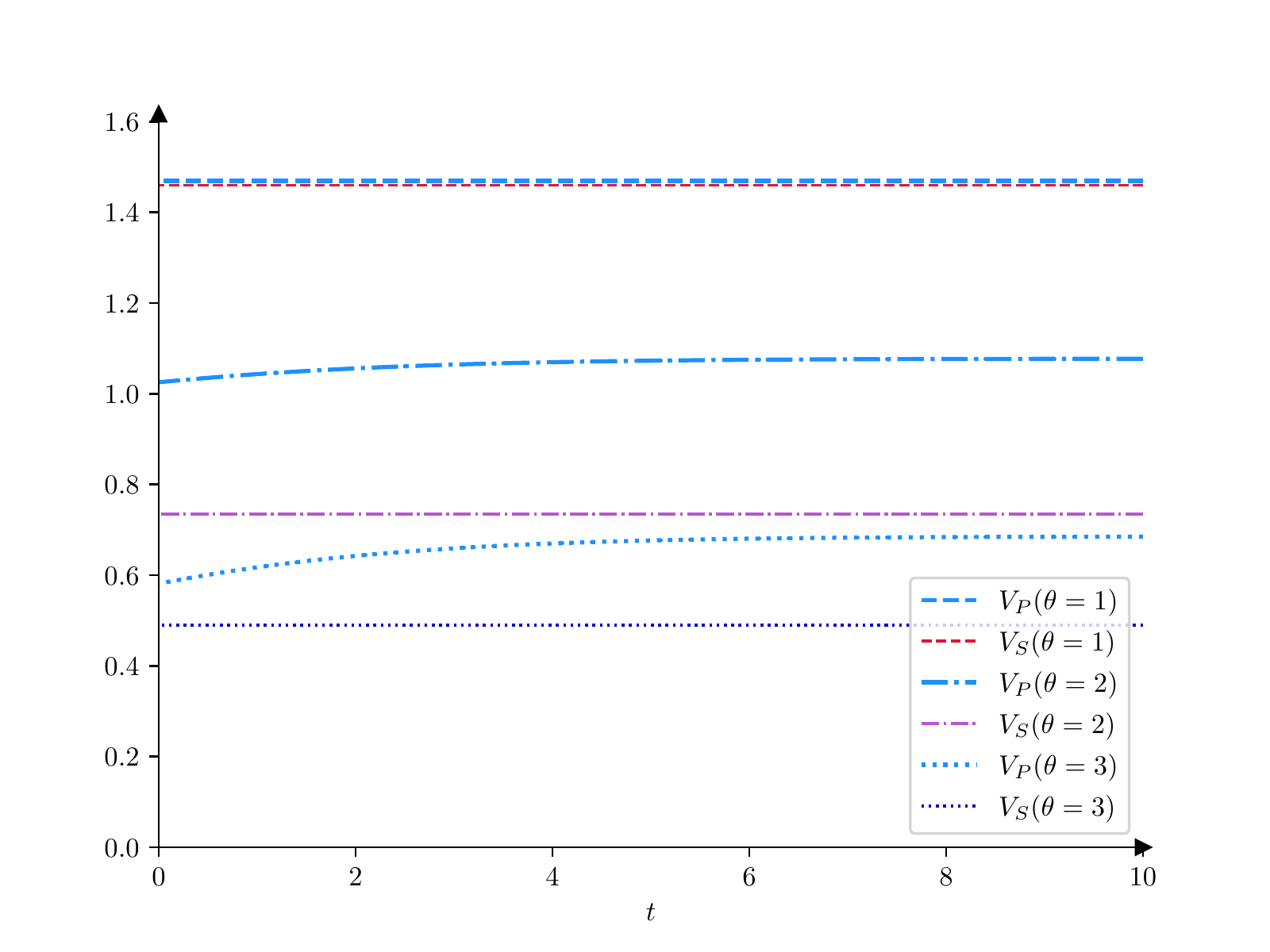}
		\label{fig:eq:2}
	}
	\\
	\subfloat[The evolution of the regulator's beliefs.]{
		\includegraphics[scale=0.5]{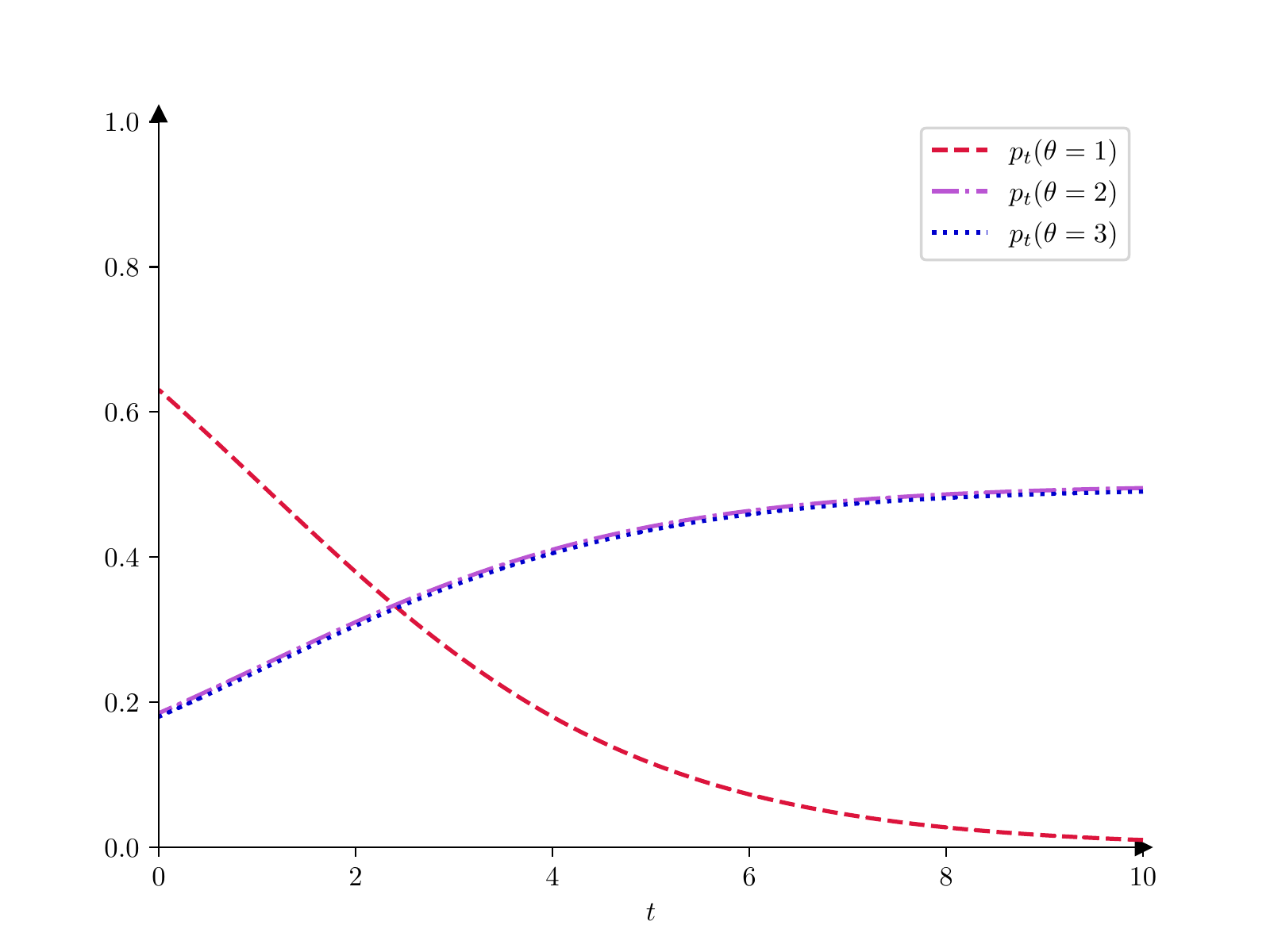}
		\label{fig:eq:3}
	}
	\subfloat[Firm's type, as expected by the regulator.]{ 
		\includegraphics[scale=0.5]{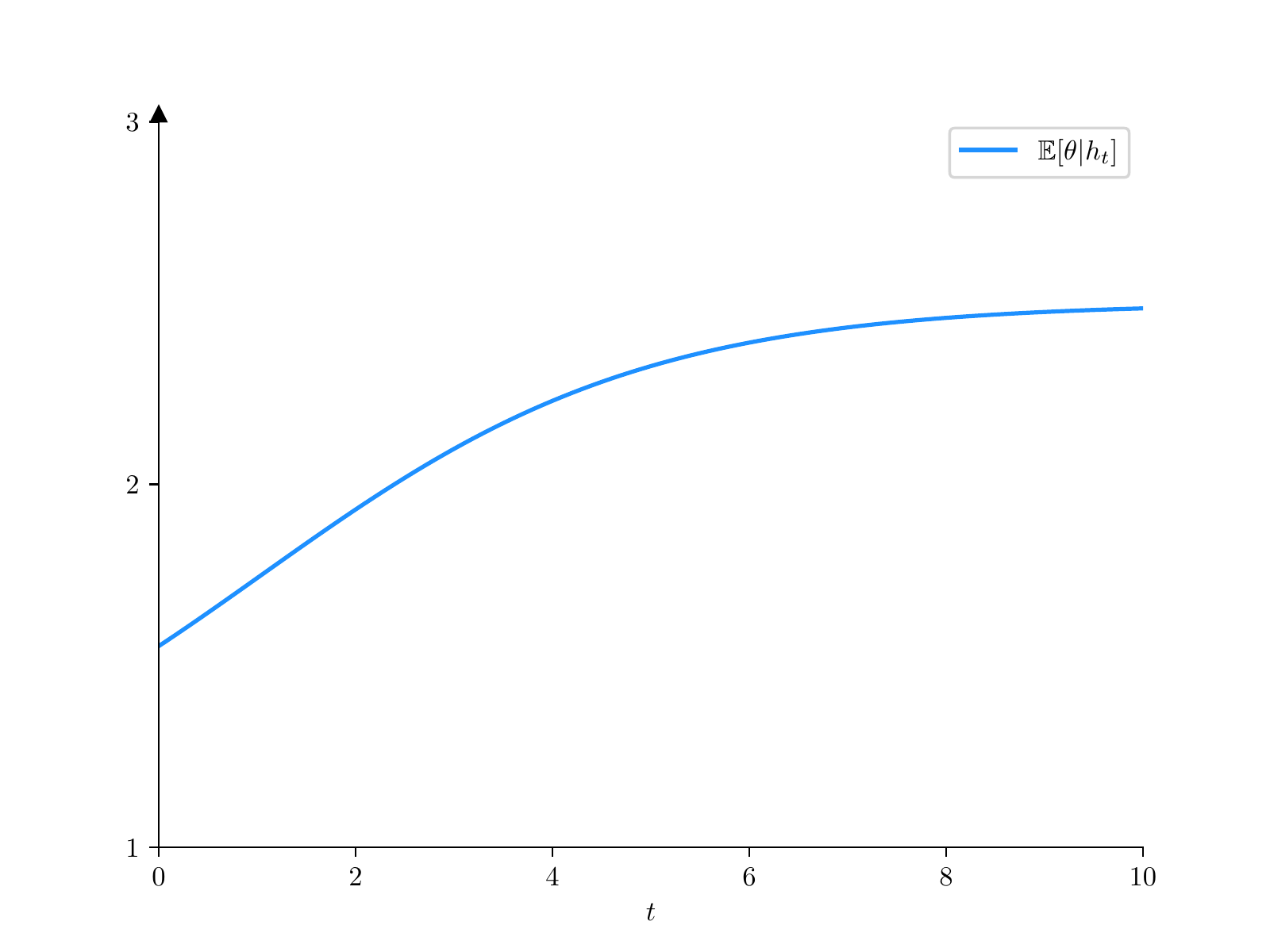}
		\label{fig:eq:4}
	}
	\caption{Example attrition equilibrium ($r=0.1$, $\lambda=0.35$, $F = 1$, separation intensity $0.5$). }
	\label{fig:eq}
\end{figure}

Figure \ref{fig:eq} presents a numerical example with period length $dt=0.1$. In this example, separation occurs over $t \in [0,10] \cap \mathcal{T}$, and type $\theta=1$ separates with intensity $0.5$ (which is equal to separation probability $0.5dt = 0.05$ per period) over that time interval. From $t=10$ onwards, all types switch to pooling indefinitely. Instead of $p_0$ we actually fix $p_{10} = (0.01,0.495,0.495)$ and unravel to arrive at $p_0 = (0.63, 0.185, 0.185)$. For the remaining parameters we assume $r=0.1$, $\lambda = 0.35$, $F = 1$.

From Panel \ref{fig:eq:1} we can see that the pooling price in this example is higher than the competitive prices, i.e., the bliss price that types $\theta=2$ or $\theta=3$ would set, but this observation is specific to the parameters. The latter follows from Corollary \ref{cor:repsig}, which mentions that the pooling action is dictated by the lowest type's indifference and is completely independent from other types' preferences. Panel \ref{fig:eq:2} demonstrates visually that all the IC constraints hold: the monopolist $\theta=1$ is indifferent between pooling and separating at all $t$, while the competitive types $\theta\in \{2,3\}$ always prefer pooling. Panels \ref{fig:eq:3} and \ref{fig:eq:4} demonstrate that the extent of information revelation can still be substantial in equilibrium: the regulator becomes more convinced of market competitiveness over time, with the probability she assigns to $\theta=1$ dropping from $0.63$ to $0.01$ along the pooling path.

\end{document}